\journalname{Procedia Computer Science}
\theoremstyle{plain}
\newtheorem{lemma}{Lemma}
\newtheorem{theorem}{Theorem}
\newtheorem{definition}{Definition}
\newcommand{\CumDensity}{\textit{cumulative density}}
\begin{document}

\begin{frontmatter}

%% Title, authors and addresses

%% use the tnoteref command within \title for footnotes;
%% use the tnotetext command for the associated footnote;
%% use the fnref command within \author or \address for footnotes;
%% use the fntext command for the associated footnote;
%% use the corref command within \author for corresponding author footnotes;
%% use the cortext command for the associated footnote;
%% use the ead command for the email address,
%% and the form \ead[url] for the home page:
%%
%% \title{Title\tnoteref{label1}}
%% \tnotetext[label1]{}
%% \author{Name\corref{cor1}\fnref{label2}}
%% \ead{email address}
%% \ead[url]{home page}
%% \fntext[label2]{}
%% \cortext[cor1]{}
%% \address{Address\fnref{label3}}
%% \fntext[label3]{}

\dochead{}
%% Use \dochead if there is an article header, e.g. \dochead{Short communication}
%% \dochead can also be used to include a conference title, if directed by the editors
%% e.g. \dochead{17th International Conference on Dynamical Processes in Excited States of Solids}

\title{Improved Algorithm for Permutation Testing}

%% use optional labels to link authors explicitly to addresses:
%% \author[label1,label2]{<author name>}
%% \address[label1]{<address>}
%% \address[label2]{<address>}

\author{
  Xiaojin Zhang\textsuperscript{\rm 1}\\
  \textsuperscript{\rm 1}Huazhong University of Science and Technology\\
  xiaojinzhang@hust.edu.cn
  }

\address{}

\begin{abstract}
   For a permutation $\pi: [K]\rightarrow [K]$, a sequence $f: \{1,2,\cdots, n\}\rightarrow \mathbb R$ contains a $\pi$-pattern of size $K$, if there is a sequence of indices $(i_1, i_2, \cdots, i_K)$ ($i_1<i_2<\cdots<i_K$), satisfying that $f(i_a)<f(i_b)$ if $\pi(a)<\pi(b)$, for $a,b\in [K]$. Otherwise, $f$ is referred to as $\pi$-free. For the special case where $\pi = (1,2,\cdots, K)$, it is referred to as the monotone pattern. \cite{newman2017testing} initiated the study of testing $\pi$-freeness with one-sided error. They focused on two specific problems, testing the monotone permutations and the $(1,3,2)$ permutation. For the problem of testing monotone permutation $(1,2,\cdots,K)$, \cite{ben2019finding} improved the $(\log n)^{O(K^2)}$ non-adaptive query complexity of \cite{newman2017testing} to $O((\log n)^{\lfloor \log_{2} K\rfloor})$. Further, \cite{ben2019optimal} proposed an adaptive algorithm with $O(\log n)$ query complexity. However, no progress has yet been made on the problem of testing $(1,3,2)$-freeness. In this work, we present an adaptive algorithm for testing $(1,3,2)$-freeness. The query complexity of our algorithm is $O(\epsilon^{-2}\log^4 n)$, which significantly improves over the $O(\epsilon^{-7}\log^{26}n)$-query adaptive algorithm of \cite{newman2017testing}. This improvement is mainly achieved by the proposal of a new structure embedded in the patterns.
\end{abstract}

\begin{keyword}
Property Testing \sep Permutation Testing
%% keywords here, in the form: keyword \sep keyword

%% PACS codes here, in the form: \PACS code \sep code

%% MSC codes here, in the form: \MSC code \sep code
%% or \MSC[2008] code \sep code (2000 is the default)

\end{keyword}

\end{frontmatter}

%%
%% Start line numbering here if you want
%%
% \linenumbers

%% main text
\section{Introduction}
We consider the problem of testing forbidden patterns. We say that $f: [n]\rightarrow \mathbb R$ contains a $\pi$-pattern of size $K$, if there is a sequence of indices $(i_1, i_2, \cdots, i_K)$ ($i_1<i_2<\cdots<i_K$), satisfying that $f(i_a)<f(i_b)$ if $\pi(a)<\pi(b)$, for $a,b\in [K]$. Otherwise, $f$ is referred to as $\pi$-free. For the special case where $\pi = (1,2,\cdots, K)$, it is referred to as the monotone pattern. Given a pattern $\pi$, we are interested in designing a tester with one-sided error. That is, the tester accepts if the input is $\pi$-free, and rejects with probability at least $2/3$ if the input differs in at least $\epsilon n$ values from every $\pi$-free sequence. \cite{ergun2000spot} proposed a non-adaptive algorithm for testing $(1,2)$-freeness with one-sided error, the query complexity of which is $O(\log n)$. They also presented a lower bound of $\Omega(\log n)$ for non-adaptive algorithms. A lower bound of $\Omega(\log n)$ for adaptive algorithm was further shown by \cite{fischer2004strength}. \cite{newman2017testing} considered monotone pattern testing for any $K$. They proposed a non-adaptive algorithm for testing $(1,2, \cdots, K)$-pattern, the query complexity of which is $(\log n)^{O(K^2)}$. 
Subsequent works seek to propose more efficient algorithms taking advantage of some smart combinatorial properties. \cite{ben2019finding} provided a non-adaptive algorithm with query complexity $O((\log n)^{\lfloor\log K\rfloor})$, which matched the lower bound. Later, \cite{ben2019optimal} presented an adaptive algorithm for testing monotone pattern, the query complexity of which is $O(\log n)$ and is optimal. The problem of testing monotone patterns is now well-understood. However, it remains open whether the non-monotone patterns could be tested in a more efficient way.

Viewing from the wide applications of forbidden patterns in combinatorics and in time series analysis, \cite{newman2017testing} proposed an adaptive tester with one-sided error for a typical forbidden pattern $(1,3,2)$. This kind of non-monotone forbidden patterns characterizes the permutations that could be generated using the identity permutation with a Gilbreath shuffle. It is worth noting that their tester finds the third element by considering the following two cases: whether the distance between $f$ and the monotone function class is larger than $\log^{-5} n$ or smaller than $\log^{-5} n$. Designing a tester according to these two cases could simplify the analysis. However, it might make the resulting tester inefficient. One possible approach towards designing a more efficient tester for $(1,3,2)$-pattern is finding some structures embedded in these patterns. Luckily, we observe that this way towards improvement is feasible by designing a new greedy approach for generating disjoint $(1,3,2)$-patterns. Importantly, the patterns generated using this approach has a monotone structure, which facilitates the design of a more efficient approach for finding the $(1,3,2)$-pattern. Besides, \cite{ben2019optimal} investigated the problem of testing monotone patterns. Their approach achieves optimal sample complexity. However, it is not clear how to directly apply their approach for designing a tester for $(1,3,2)$-pattern with improved query complexity. With this newly found monotone structure, we could borrow the idea from \cite{ben2019optimal}, and design efficient approaches for identifying $(1,3,2)$-pattern from a dual point of view.

We provide an adaptive algorithm for testing $(1,3,2)$-freeness with improved query complexity. By observing that a monotone structure could be constructed using the third element of $(1,3,2)$-pattern, we design an algorithm that achieves $\tilde O(\epsilon^{-2}\log^4 n)$ query complexity. The advantage of this structure is that binary search could be directly performed in an efficient way. Another key step towards designing an efficient binary search approach is to identify the element that belongs to the monotone sequence. The naive approach of performing random sampling could only guarantee a small success probability, thereby leading to a rather large query complexity. We overcome this challenge by observing that the newly found monotone structure implies that the algorithm for testing monotone pattern with length $2$ could be used to identify the element in an efficient way. Thereby achieving an improved query complexity.

\subsection{Main Results}
\iffalse
\begin{theorem}\label{lem_monotone_two}[Restatement of Theorem~\ref{lem_monotone_two}]
For any $\epsilon>0$, there exists an algorithm that, given query access to a function $f:[n]\rightarrow \mathbb R$ which is $\epsilon$-far from $(1,2)$-free, outputs a length-$2$ monotone subsequence of $f$ with probability at least $9/10$. The query complexity of this algorithm is $O(\epsilon^{-2}\log n)$. 
\end{theorem}

\cite{ben2019optimal} propose an optimal adaptive algorithm for testing monotone pattern with query complexity $\left(k^k \cdot (\log(1/\epsilon))^k \frac{1}{\epsilon}\cdot \log(1/\delta)\right)^{O(k)}\cdot \log n$. Our algorithm is both simple and efficient. The dependence of our algorithm on $\log n$ is optimal.
\fi

\begin{theorem}\label{thm: test_132_main_result}
For any $\epsilon>0$, there exists an adaptive algorithm that, given query access to a function $f:[n]\rightarrow \mathbb R$ which is $\epsilon$-far from $(1,3,2)$-free, outputs a $(1,3,2)$ subsequence of $f$ with probability at least $9/10$. The query complexity of this algorithm is $O(\epsilon^{-2}\log^4 n)$. 
\end{theorem}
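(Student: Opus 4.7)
The plan is to refine the Newman--Rabinovich--Rubinfeld framework by replacing their bifurcated case analysis (monotone-close versus monotone-far on $f|_{I_R'}$) with a single structural observation announced in the introduction: the third coordinates of greedily-chosen disjoint $(1,3,2)$-tuples can be arranged into a monotone skeleton, after which the $(1,2)$-pattern machinery of \cite{ben2019optimal} can be plugged in uniformly. First I would sample a random scale $w\in\{0,1,\ldots,\lceil\log n\rceil-1\}$ and partition $[n]$ into consecutive blocks of length $W=2^w$. By the density argument reviewed in Section 3.3, with probability $\Omega(1/\log n)$ over the choice of $w$, there exists an index $s$ such that at least $\Omega(\epsilon W/\log n)$ disjoint $(1,3,2)$-tuples have their middle coordinate in $I_s$, their left coordinate in $I_L=I_{s-2}\cup I_{s-1}\cup I_s$, and their right coordinate in $I_R=I_{s+1}\cup I_{s+2}$.

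The central new step is to extract from these disjoint tuples a subcollection whose right-endpoint set $K\subseteq I_R$ is monotone increasing under $f$. I would prove this via a left-to-right greedy rule on the tuples and an exchange argument on inversions: if $k<k'$ both lie in $K$ with $f(k)>f(k')$, one of the two tuples can be swapped or discarded without dropping the surviving count below $\Omega(\epsilon W/\log n)$, since the medium-value property of the third coordinate forces $f(k)$ and $f(k')$ into a narrow range that is compatible with concatenation. Dually, the middle coordinates $J\subseteq I_s$ of the surviving tuples inherit a monotone structure once conditioned on $f$-values above $f(K)$, which I will need for the binary search below.

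Given this skeleton, I locate one anchor $x\in K$ by applying the adaptive $(1,2)$-pattern tester of \cite{ben2019optimal} to $f|_{I_R}$ at cost $\tilde O(\epsilon^{-2}\log W)$: because $K$ is an increasing subsequence of density $\Omega(\epsilon/\log n)$ inside $I_R$, the length-$2$ monotone tuple returned by the tester can be checked against a small fresh random sample so that, with constant probability, at least one of its endpoints lies in $K$. With the anchor $x$ in hand, I run two noisy binary searches: the first searches $I_s$ for a middle coordinate $j>x$--actually $j<x$ with $f(j)>f(x)$, exploiting the near-monotone structure on $J$ inherited from $K$ so that the binary-search decisions become correct on all but a small fraction of queries; the second searches $I_L$ for a left coordinate $i<j$ with $f(i)<f(x)$, again using the dual near-monotone structure of the left coordinates above their own threshold. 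Each noisy binary search costs $O(\log^2 n)$ after standard boosting of individual comparisons, contributing $O(\log^3 n)$ in total. Combining the $O(\log n)$ factor for the random scale, the $O(\epsilon^{-2}\log n)$ anchor step, and the binary searches gives the claimed $O(\epsilon^{-2}\log^4 n)$ bound, and repeating the whole routine $O(1)$ times drives the success probability above $9/10$.

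The main technical obstacle is the structural lemma in the second paragraph: I need a greedy rule that simultaneously preserves an $\Omega(1/\log n)$-fraction of disjoint $(1,3,2)$-tuples and arranges their right endpoints into a genuinely monotone--rather than merely near-monotone--sequence, because the density bound fed into the $(1,2)$-tester must survive the exchange argument without further losses in $\epsilon$ or $\log n$. Tightening the exchange bookkeeping and verifying that the dual monotone structure on $J$ is strong enough for noisy binary search to converge is where the bulk of the work lies; once these structural facts are in place, the anchor-finding and binary-search arguments parallel those of \cite{ben2019optimal} with only routine modifications.
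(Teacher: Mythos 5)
Your high-level intuition matches the paper's: isolate a monotone structure among the third coordinates of greedily chosen disjoint $(1,3,2)$-tuples, then exploit it with a $(1,2)$-style tester and a binary search. But the execution diverges from the paper's in ways that open real gaps.

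First, your exchange argument for manufacturing a monotone right-endpoint set $K$ is not clearly correct. Discarding or swapping tuples to kill inversions can, in adversarial instances, destroy a constant fraction of the collection, and you have not quantified the loss. The paper avoids this entirely: it constructs the disjoint tuples greedily \emph{from right to left} (picking the largest available index at each step), and Lemma~\ref{lem:ijk_monotone} shows that for any pivot $l$, the third coordinates of tuples whose $(j,k)$-segment crosses $l$ are \emph{automatically} $f$-increasing, with no exchange bookkeeping needed. Your left-to-right greedy rule does not produce this property.

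Second, and more seriously, you postulate a ``dual near-monotone structure'' on the middle coordinates $J$ so that you can run a second noisy binary search for $j$ (and then a third for $i$). No such structure is established, and I do not see how to get one: Lemma~\ref{lem:ijk_monotone} only controls the $k$'s. The paper sidesteps this by finding $(i,j)$ \emph{first} via its own $\text{TestMonotoneEpoch}_{(1,2)}$ subroutine (a pair is easy because $(i,j)$ is just a $(1,2)$-pattern in the left block), and then running \emph{one} binary search for $k$ over the genuinely monotone set $K$. Your order of operations forces you to invent extra structure that isn't there.

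Third, your anchor step does not justify why a $(1,2)$-pattern output by the tester of \cite{ben2019optimal} on $I_R$ would have an endpoint in $K$; $K$ has density only $\Theta(\epsilon/\log n)$ in $I_R$, so most $(1,2)$-pairs in $I_R$ miss it. The paper's trick is different: each $k\in K$ has a partner $j\in L_t(l)$ with $f(j)>f(k)$, so a length-$2$ pattern tester run on $L_t(l)\cup I_h$ and constrained to cross $l$ reliably returns an element of $K$; this is the purpose of the FindCoordinate subroutine. Your plan also says ``standard boosting of individual comparisons'' handles the noise in binary search, but the relevant noise is structural (a random coordinate is usually \emph{not} in $K$, and the density of $K$ can shrink as the search interval shrinks); the paper controls this via the $\gamma$-deserted-element argument (Lemma~\ref{lem: gamma_deserted_lemma} and Lemma~\ref{lem:rtl_and_gamma_deserted}), which your proposal does not invoke. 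Finally, the paper's case split by gap (whether $j_2-j_1$ or $j_3-j_2$ dominates) is missing from your plan; the binary-search machinery only applies cleanly in the $2$-gap case, and the $1$-gap case requires the separate Algorithm~\ref{Framework: Testing $(1,3,2)$-pattern when 1-gap terminates}.
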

The query complexity of the adaptive algorithm proposed by \cite{newman2017testing} is $O(\epsilon^{-7}\log^{26}n)$. Our algorithm thus significantly improves over the previous work.

\begin{table}[!htp]%[tp]
  \centering
  %\begin{threeparttable}
  \caption{A summary of works on permutation testing}
  \label{tab: three environments}
    \begin{tabular}{cccccc}
    \toprule
    \hline
    Algorithm & Permutation & Query complexity & Type of algorithm & Type of error\cr
    \midrule
  \cite{ergun2000spot} & $(1,2)$ & $O(\log n/\epsilon)$ & non-adaptive & one-sided\cr
  \cite{fischer2004strength} & $(1,2)$ & $\Omega(\log n)$ & adaptive & one-sided\cr
  %\hline
  \cite{newman2017testing} & $(1,3,2)$ & $O(\epsilon^{-7}\log^{26}n)$ & adaptive & one-sided\cr
  \cite{newman2017testing} & $(1,3,2)$ & $\Omega(\sqrt{n})$ & non-adaptive & one-sided\cr
  \cite{newman2017testing} & $(1,2,\cdots, K)$ & $(\epsilon^{-1}\log n)^{O(K^2)}$ & non-adaptive & one-sided\cr
  %\hline
  \cite{ben2018improved} & general $\pi$ of length $K\ge 3$ & $\Theta(\epsilon^{-1/K} n^{1-1/K})$ & non-adaptive & one-sided\cr
  \cite{ben2019finding} & $(1,2,\cdots, K)$ & $\Theta((\log n)^{\lfloor \log_{2} K\rfloor})$ & non-adaptive & one-sided\cr
  \cite{ben2019optimal} & $(1,2,\cdots, K)$ & $\Theta(\log n)$ & adaptive & one-sided\cr
  Our work & $(1,3,2)$ & $O(\epsilon^{-2}\log^4 n)$ & adaptive & one-sided\cr
     \hline
    \bottomrule
    \end{tabular}
    %\end{threeparttable}
\end{table}

\iffalse
\begin{table}[!htp]%[tp]
  \centering
  %\begin{threeparttable}
  \caption{A summary of works on permutation testing}
  \label{tab: three environments}
    \begin{tabular}{ccccc}
    \toprule
    \hline
    Algorithm & Query complexity & Type of algorithm & Type of error\cr
    \midrule
  \cite{newman2017testing} & $O(\epsilon^{-7}\log^{26}n)$ & adaptive & one-sided\cr
  \cite{ben2018improved} & $\Omega(\sqrt{n})$ & non-adaptive & one-sided\cr
  This work & $\tilde O(\epsilon^{-2}\log^4 n)$ & adaptive & one-sided\cr
     \hline
    \bottomrule
    \end{tabular}
    %\end{threeparttable}
\end{table}
\fi

\subsection{Preliminaries}
%We will introduce some basic definitions in this section.

\iffalse
\begin{definition}[$\epsilon$-close to monotone]

This means that the size of monotone sequence is at least $(1-\epsilon)|I|$ in interval $I$. Therefore, the density of monotone sequence is at least $1-\epsilon$. 
\end{definition}
\fi

A sequence $f$: $\{1,2,\cdots, n\}\rightarrow \mathbb{R}$ is \textit{$\pi$-free}, if there are no subsequences of $f$ with order pattern $\pi$. We say that a sequence $f$ is \textit{$\epsilon$-far from $\pi$-free} if any $\pi$-free function $g: [n]\rightarrow \mathbb{R}$ satisfies $P_{i\in [n]} [f(i)\neq g(i)]\ge\epsilon$. Let $k, \epsilon$ be two fixed parameters. Given query access to a function $f: [n]\rightarrow \mathbb{R}$, the tester with \textit{one-side error} satisfies that: If $f$ is $\epsilon$-far from $\pi$-free, then the algorithm could output a $\pi$ pattern with probability at least $9/10$. If $f$ is $\pi$-free, then the algorithm returns accept. For any two tuples $(i_1, i_2, \dots, i_K)$ and $(j_1, j_2, \dots, j_K)$, they are referred to as \textit{disjoint} if and only if $i_h\neq j_m$, for any $h, m \in\{1,2,\dots, K\}$. Let $(i_1, i_2, \cdots, i_{K})$ be a monotone subsequence of $f$. If $c$ is the smallest integer such that $i_{c+1} - i_c\ge i_{b+1} - i_b$ for all $b\in [K-1]$, then we say that $(i_1, i_2, \cdots, i_{K})$ is an \textit{$c$-gap} subsequence, and $c$ is referred to as the gap of this tuple.

\subsection{Technical Overview}

\iffalse
First key ingredient of our algorithm is finding the monotone structure constructed by the third element of $(1,3,2)$-permutation, which could be used to conduct random binary search. 

Then we need to overcome the challenge of identifying the element that belongs to the monotone sequence. The second key ingredient is using the tester for two-element patterns to identify the monotone element. Besides, the tester for two-element patterns is newly proposed, which is more efficient than that proposed by \cite{ben2019optimal}. This tester could be further used to improve the query complexity of $(1,3,2)$-tester.
\fi

\iffalse
the tester for two-element patterns is newly proposed, which is more efficient than that proposed by \cite{ben2019optimal}. This tester could be further used to improve the query complexity of $(1,3,2)$-tester.
\fi

We propose a simple algorithm for testing $(1,2)$-freeness. This tester is both simple and efficient. Besides, it could be further used to improve the query complexity of the tester for $(1,3,2)$-tuple. The proposal of this simple algorithm is inspired by the structure found by \cite{ben2019finding}, for $f: [n]\rightarrow\mathbb{R}$ that is $\epsilon$-far from monotone pattern $(1,2,\cdots, K)$. Let $\delta_{l,i}$ represent the fraction (also referred to as \textit{density}) of $c$-gap tuples that belongs to the range $[l-K\cdot 2^{i+1}, l+K\cdot 2^{i+1}]$ (referred to as a \textit{block}).
From the dual point of view, it holds that the expected cumulative density $\mathbb E_{l\sim [n]}[\sum_{i\in [\eta]}\delta_{l,i}]$ is at least $\epsilon$, where $\eta = O(\log n)$. With this equation and setting $K = 2$, we analyze according to two cases based on the summands $\mathbb E_{l\sim [n]}[\sum_{i\in [\eta]}\delta_{l,i}]$.

In the first case, the summands are spread over distinct blocks. That is, $\delta_{l,i}\le\epsilon$ for all $i$. The density of each block is rather small, while the summation of the density of all the blocks are rather large. This property ensures that at least two indices could be found from distinct blocks separately, thereby forming a $(1,2)$-tuple. In the second case, the summands are concentrated on few blocks. That is, there exists an integer $i$ such that $\delta_{l,i}\ge\epsilon$. Combined with the symmetric property, the indices found from the left block ($[l-K\cdot 2^{i+1}, l]$) and the right block ($[l, l+K\cdot 2^{i+1}]$) could be concatenated to form a $(1,2)$-pattern with high probability.

Now we will introduce the tester for $(1,3,2)$-pattern. The first key ingredient of our algorithm is that we utilize the monotone structure constructed by the third element of $(1,3,2)$-permutation. Specifically, if the algorithm constructs the subsequence by greedily selecting a maximum index, satisfying that this index is a part of the $(1,3,2)$-pattern. Then it turns out that the set of the third elements at the right side of a fixed index $l$ forms a monotone sequence (Lemma \ref{lem:ijk_monotone}). With this monotone structure, our algorithm could then conduct random binary search efficiently. Another challenge is to identify the element that belongs to the monotone sequence. We overcome this challenge by using the tester for two-element pattern. Naively, random sampling is used in order to find the monotone element. However, this approach might lead to large query complexity due to its small success probability. With the property of the monotone structure, we could design efficient algorithm to identify the element that belongs to the monotone structure based on the tester for length $2$ pattern.

%The advantage of this structure is that binary search could be performed in an efficient way compared with the noisy binary search of \cite{newman2017testing}.

\section{Review of previous work}

The problem of testing monotonicity of a function over a partially ordered set has been widely investigated. Various works consider the case when the partially ordered set is the line $[n]$ (e.g., \cite{fischer2004strength, belovs2018adaptive, pallavoor2017parameterized,ben2018testing}), the Boolean hypercube $\{0,1\}^d$ (e.g., \cite{dodis1999improved,blais2012property,briet2012monotonicity}), and the hypergrid $[n]^d$ (e.g., \cite{blais2014lower, chakrabarty2013optimal,black2018d}). Other related problems include estimating the distance to monotonicity and the length of the longest increasing subsequence (e.g., \cite{parnas2006tolerant, ailon2007estimating, gopalan2007estimating, saks2010estimating}).

Another line of works focus on identifying order patterns in sequences and permutaions (e.g., \cite{fox2013stanley, guillemot2014finding}). \cite{newman2017testing} propose a non-adaptive algorithm for testing monotone patterns, and show that the non-adaptive query complexity is $\Omega(\sqrt{n})$ for any non-monotone pattern. The query complexity of the non-adaptive algorithm proposed by \cite{newman2017testing} for testing $(1,2, \cdots, K)$-pattern is $(k\epsilon^{-1}\log n)^{O(K^2)}$. Later, \cite{ben2019optimal} propose an adaptive algorithm for testing monotone patterns, the query complexity of which is $\left(K^K \cdot (\log(1/\epsilon))^K \frac{1}{\epsilon}\cdot \log(1/\delta)\right)^{O(K)}\cdot \log n$.

\subsection{Growing Suffixes and Splittable Intervals}%{Structure of \cite{ben2019optimal} for Monotone Testing}
Firstly we introduce some basic definitions for illustrating the key structure used in  \cite{ben2019optimal}.%introducing the structure. 
\begin{definition}[Growing Suffix]%[ \cite{ben2019finding}]
The collection of intervals $S(a) = \{S_t(a)| t\in [h]\}$ is referred to as an $(\alpha, \beta)$-growing suffix starting from $a$, if there exists a subset $D(a) = \{D_t(a)\subset S_t(a)| t\in [h]\}$, such that the following properties hold:
\begin{itemize}
\item  $|D_t(a)|/|S_t(a)|\le\alpha$ for all $t\in [h]$, and $\sum_{t=1}^{h} |D_t(a)|/|S_t(a)|\ge\beta$;
\item $f(b)<f(b')$ if $b\in D_t(a)$ and $b'\in D_{t'}(a)$, where $t, t'\in [h]$ and $t<t'$.
%\item  For every $t, t'\in [m]$, where $t<t'$, then $f(b)<f(b')$ for $b<b'$.
\end{itemize}

\end{definition}

\begin{definition}[Splittable Interval]% \cite{ben2019finding}]
Let $I\subset [n]$ be an interval. Let $\alpha, \beta\in (0,1]$ and $c\in[k-1]$. Define $T^{(L)} = \{ (i_1, i_2, \cdots, i_c)\in I^c: (i_1, i_2, \cdots, i_c) \text{ is a prefix of a } k\text{-tuple in } T\},$ and $T^{(R)} = \{ (j_1, j_2, \cdots, j_{k-c})\in I^{k-c}: (j_1, j_2, \cdots, j_{k-c})$ \text{ is a suffix of a } $k\text{-tuple in } T\}$, where $T\subset I^{k}$ is a set of disjoint $(1,2, \cdots, K)$-tuple lying in $I$. $(I, T)$ is $(c, \alpha, \beta)$-splittable if the following properties are satisfied:

\begin{itemize}
\item $|T|/|I|\ge\beta$;

\item $f(i_c)<f(j_1)$ for every $(i_1, i_2, \cdots, i_c)\in T^{(L)}$ and $(j_1, j_2, \cdots, j_{k-c})\in T^{(R)}$;

\item There is a partition of $I$ into three adjacent intervals $L, M, R\subset I$ of size at least $\alpha |I|$, satisfying $T^{(L)}\subset L^c$ and $T^{(R)}\subset R^{k-c}$.
\end{itemize}

A collection of disjoint interval-tuple pairs $(I_1, T_1), (I_2, T_2), \cdots, (I_s, T_s)$ is called a $(c, \alpha, \beta)$-splittable collection of $T$ if each $(I_j, T_j)$ is $(c,\alpha,\beta)$-splittable and the sets $(T_j: j\in [s])$ partition $T$.
\end{definition}

Now we will introduce the robust structure of monotone pattern introduced by \cite{ben2019optimal}.
\begin{lemma}[\cite{ben2019optimal}]\label{lem: structure_lem_robust}
There exist $\alpha\in (0,1)$ and $p>0$ with $\alpha\ge\Omega(\epsilon/K^5)$ and $p\le \text{poly}(K\log(1/\epsilon))$ such that at least one of the following holds. 
%The growing suffix is constructed using the following approaches. 
\begin{itemize}
  \item There exists a set $H\subset [n]$, of indices that start an $(\alpha, Ck\alpha)$-\textbf{growing suffix}, satisfying $\alpha |H|\ge (\epsilon/p)n$.
  \item There exist an integer $c$ with $1\le c < K$, a set $T$, with $E(T)\subset E(T^0)$, of disjoint length-$k$ monotone subsequences, and a $(c, 1/(6k), \alpha)$-\textbf{splittable collection} of $T$, consisting of disjoint interval-tuple pairs $(I_1, T_1), (I_2, T_2), \cdots, (I_s, T_s)$, such that
\begin{align}
\alpha\sum_{h=1}^s |I_h|\ge (\epsilon/p)|I|.
\end{align}
Moreover, if $J\subset I$ is an interval where $J\subset I_h$ for some $h\in [s]$, $J$ contains at least $(\epsilon/p)|J|$ disjoint $(1,2, \cdots ,K)$-patterns from $T^0$.
\end{itemize}
\end{lemma}

\subsection{The adaptive algorithm for testing $(1,2,\dots, K)$-pattern}

Now let us introduce the adaptive algorithm proposed by \cite{ben2019optimal} for testing $(1,2,\dots, K)$-pattern. They consider two cases, growing suffix and splittable interval. We will firstly introduce the basic definition of these two cases, and the key structure their algorithm depends on. The basic definition of growing suffix and splittable interval, and the key structure are illustrated in Lemma~\ref{lem: structure_lem_robust}.

%We will first illustrate some definitions about this key structure. 

The core idea of the adaptive tester for monotone-pattern of \cite{ben2019optimal} is as follows: From Lemma~\ref{lem: structure_lem_robust}, $f$ either satisfies the growing suffixes condition, or satisfies the robust splittable intervals condition. If $f$ satisfies the growing suffixes condition, then \cite{ben2019finding} already provide an algorithm with $O(\log n)$ query complexity. Assume $f$ meets the robust splittable intervals condition, the algorithm firstly finds an index $x$ that lies at the left part of a splittable interval, and then uses $y$ to estimate the length of an interval, where $y$ is the maximal index satisfying that $f(y)\ge f(x)$ by sampling from all the blocks with length $2^i$ starting from $x$ for $i\in [\lceil\log n\rceil]$.

If $y$ lies close to the interval, then the algorithm could find two intervals each with density $\epsilon$, and the tuples in these two intervals could be concatenated to form $(1,2, \dots, K)$-pattern. This means that there exists an integer $c$, such that any tuple $(1,2,\dots, c)$ from the left interval could be concatenated with the tuple $(c+1, c+2, \dots, n)$ from the right interval. The ratio of such tuples in both intervals are at least $\epsilon$.

%thereby forming $(1,2,\dots,k)$ pattern.

If $y$ lies far from the interval, then the algorithm could not find two intervals each with density at least $\epsilon$ within the splittable interval. Instead, the algorithm constructs $k$ successive intervals with density $\epsilon$, according to the strong structure lemma considering growing suffixes and splittable intervals. By recursively using the monotone pattern tester to find a monotone tuple of length $(K-i)$ from $i$-th interval with values larger than $y$, and a monotone tuple of length $(i+1)$ from $i$-th interval with values smaller than $y$. The algorithm could finally find a $K$-monotone tuple by either concatenating the tuple in the boundary interval with $x$ and $y$ (the tuple in the first interval together with $x$, or the tuple in the last interval together with $y$), or the tuple in $i$-th interval together with that in $(i+1)$-th interval. By induction on $K$, the algorithm could find a $(1, 2, \dots, K)$ pattern with high probability with $O_{K,\epsilon}(\log n)$ query complexity.

\subsection{The adaptive algorithm for testing $(1,3,2)$-freeness}

\cite{newman2017testing} proposed an adaptive algorithm for testing $(1,3,2)$-freeness, the query complexity of this algorithm is shown in the following lemma.

\begin{lemma}[\cite{newman2017testing}]
Given query access to a function $f: [n]\rightarrow R$ that is $\epsilon$-far from $(1,3,2)$-free. There exists an algorithm with arguments $(f, \epsilon)$ that returns a $(1,3,2)$-tuple in $f$ with probability at least $\Omega(\epsilon^3/(\log^{6} n)$. The query complexity is at most $O(\epsilon^{-4}\log^{20}n)$.
\end{lemma}

The algorithm firstly divides $[n]$ into intervals $I_1, I_2, \cdots, I_{\lceil n/W\rceil}$ with equal length $W= 2^w$, where $w$ is selected uniformly at random from $\{0,1,\cdots, \lceil\log n\rceil -1\}$. The main techniques are disclosed in the scenario when $2$-gap tuples dominate and therefore we focus on elaborating on the high-level idea of the algorithm designed for this scenario.
Let $T_{2,w}$ represent the disjoint $2$-gap tuples $(i,j,k)$ satisfying that $|k-j|\ge|j-i|$, and the gap between $j$ and $k$ is $\lfloor\log(k-j)\rfloor = w$. 
If $f: [n]\rightarrow \mathbb R$ is $\epsilon$-far from $\pi$-free, then the density of tuples $(i,j,k)$ in $[n]$ is at least $\Omega(\epsilon)$. Therefore, there exists a $w$, satisfying that the density of $T_{2,w}$ in $[n]$ is at least $\Omega(\epsilon/\log n)$. 

Let $T_{2,w,s}$ represent the tuples $(i,j,k)$ in $T_{2,w}$ with $j$ lies in $I_s$, where $s\in [\lceil n/W\rceil]$. It could be inferred that $|T_{2,w,s}|\ge\Omega(\epsilon W/\log n)$ holds for a large fraction of $s$. For any tuple $(i,j,k) \in T_{2,w,s}$, it could be inferred that $(i,j)\in I_L = I_{s-2}\cup I_{s-1}\cup I_{s}$, and $k\in I_R = I_{s+1}\cup I_{s+2}$. Therefore, the density of $(i,j)$ such that $(i,j,k)\in T_{2,w,s}$ in $I_L$ is at least $\Omega(\epsilon/\log n)$, and the density of $k$ such that $(i,j,k)\in T_{2,w,s}$ in $I_R$ is at least $\Omega(\epsilon/\log n)$. 
Let $I_L' = \{i\in I_L: f(i)> f(i_0)\}$, and $I_R' = \{i\in I_R: f(i) > f(i_0)\}$, where $i_0$ is the index with a smallest $f$-value among the sampled $q$ indices, and $q = 100\epsilon^{-4}\log^{20} n$. Then, the algorithm could firstly find $i, j$ using the monotone testing algorithm, and then find $k$ according to the following two cases:

1. If $f|_{I_R'}$ is $(\log^{-5} n)$-far from monotone. Then the density of disjoint $(2,1)$-pattern in $I_R'$ is at least $(\log^{-5} n)/2$. The algorithm will find $(j,k)$ from $I_R'$ using the non-adaptive algorithm, and then concatenate with $i_0$ to form a $(1,3,2)$-tuple. 

2. If $f|_{I_R'}$ is $(\log^{-5} n)$-close to monotone. The algorithm could firstly find a $(i,j)$ from $I_L'$, satisfying that $(i,j)$ forms a monotone increasing subsequence. Then the algorithm adaptively finds $k$ from $I_R'$, satisfying that $(i,j,k)$ forms a $(1,3,2)$-tuple. Specifically, the algorithm uses noisy binary search to find such a $k$ based on the property that $f|_{I_R'}$ is close to monotone. Note that $I_R'$ is not known to the algorithm. It is guaranteed that $|I_R'|/|I_R|\ge \epsilon/(8\cdot\log n)$, the algorithm could perform binary search over $I_R$ instead.

%It is worth noting that, the introduction of deserted element

%Maybe we could think about whether ... a more efficient algorithm for this monotone structure?

%\begin{theorem}\label{lem_monotone_two}
%For any $\epsilon>0$, there exists an adaptive algorithm that, given query access to a function $f:[n]\rightarrow \mathbb R$ which is $\epsilon$-far from $(1,2)$-free, outputs a length-$2$ monotone subsequence of $f$ with probability at least $9/10$. The query complexity of this algorithm is $O(\epsilon^{-2}\log n)$.
%\end{theorem}

%If we could not find a element in $O(1/\epsilon)$ number of queries, then the density of this interval is at most $\epsilon$, with probability at least ...

\section{Improved algorithm for testing $(1,3,2)$-freeness}\label{sec: sec_alg_1_3_2_main}

\subsection{Simple algorithm for testing monotone pattern with length $2$}\label{sec: mono_pattern_test}
Let $U$ be the set of disjoint $(i_1, i_2)$-sequences of $f$. For $a<l<b$, the width of the pair $(a,b)$ is defined as $\lceil\log(b-a)\rceil$. Let $A_{l, t}$ denote the $(i_1, i_2)$-tuples with width $t$, and $l$ lies in the middle third of the interval of $[i_1,i_2]$. Specifically, $A_{l, t} = \{(i_1, i_2)\in U: \text{width}(i_1, i_{2}) = t, l\in [i_1 + (i_{2}-i_1)/3, i_{2} - (i_{2}-i_{1})/3]\}$. For $l\in [n]$, we define the $\CumDensity$ of $l$ as $v_l = \sum_{t=1}^{\lceil\log n\rceil} \delta_{l, t}$, where $\delta_{l, t} = |A_{l, t}|/2^{t+1}$. The following lemma shows that most of the indices have large cumulative density.

\begin{lemma}\label{density_l_epsilon}
If $f$ is $\epsilon$-far from $(1,2)$-free, then the ratio of $l\in [n]$ satisfying that $v_l\ge\epsilon/24$ is at least $\epsilon/24$.
\end{lemma}

Suppose $l\in [n]$ is a an index with high $\CumDensity$. Given this index, we design an algorithm for testing $(1,2)$-freeness with $O(\epsilon^{-1}\log n)$ query complexity. The introduction of this tester further facilitates the design of tester for $(1,3,2)$-pattern. Note that testing $(1,2)$-freeness could be directly reduced to testing $(2,1)$-pattern in the reversed sequence. We now focus on presenting the algorithm for testing $(1,2)$-freeness, the corresponding algorithm is referred to as $\text{TestMonotoneEpoch}_{(1,2)}$. If the algorithm aims at identifying $(2,1)$-tuple, then the checking condition should be modified as identifying a $(2,1)$-tuple instead. The corresponding algorithm for testing $(2,1)$-tuple is referred to as $\text{TestMonotoneEpoch}_{(2,1)}$.

%The approach is as follows: Then the algorithm could find a $(1,2, \cdots ,c)$ tuple and $(c+1, c+2, \cdots ,n)$ tuple recursively (focusing on one single tuple with density epsilon), and these two tuples could be concatenated to form a $(1, 2, \cdots ,k)$ tuple.

%The main algorithm will use a subroutine MonotoneTesting to test $(1,2)$-pattern. In order to design an efficient algorithm for testing $(1,3,2)$-freeness, it requires an efficient monotone testing algorithm with respect to the parameter $\epsilon$.

%The term $\log(1/\epsilon)^k$ is obtained due to ...

The algorithm for testing $(1,2)$-freeness is illustrated in Algorithm \ref{alg: testing $(1,2)$-freeness each epoch}. The main idea of this algorithm is: If the density is separated among distinct blocks, then the algorithm finds $1$ and $2$ from the right blocks; If the density is concentrated on few blocks, then the algorithm combines $1$ obtained from the left block and $2$ from the right block.

\begin{algorithm}[h]
\caption{$\text{TestMonotoneEpoch}_{(1,2)}(f, [n], \epsilon,l)$}
\label{alg: testing $(1,2)$-freeness each epoch}
\begin{algorithmic}[1]
\FOR{$t = 1, 2, \cdots, \lceil\log (n-l)\rceil$}
\STATE{Consider the intervals $L_t(l) = [l-2^t, l]$ and $R_t(l) = [l, l+2^t]$.}
\STATE{Sample from $R_t(l)$ uniformly at random for $\Theta(1/\epsilon)$ times. Sample from $L_t(l)$ uniformly at random for $\Theta(1/\epsilon)$ times.}
\IF{a $(1,2)$-tuple is found}\label{line: check_tuple}
\STATE{return this tuple.} 
\ENDIF
\ENDFOR\\
%\STATE{// the algorithm does not find any $(1,2)$-tuple}
\STATE{return FAIL.} 
\end{algorithmic}
\end{algorithm}

Now we are ready to show the correctness and query complexity of our algorithm as shown in Theorem~\ref{lem_monotone_two}. The missing proofs in this section are deferred to \ref{proof_sec_6_1} and \ref{proof_sec_6_2}.

\begin{theorem}\label{lem_monotone_two}%[Restatement of Theorem~\ref{lem_monotone_two}]
Suppose $l\in [n]$ is a an index with $\CumDensity$ at least $\epsilon/24$. For any $\epsilon>0$, given $l$ and query access to a function $f:[n]\rightarrow \mathbb R$ which is $\epsilon$-far from $(1,2)$-free, Algorithm~\ref{alg: testing $(1,2)$-freeness each epoch} outputs a length-$2$ monotone subsequence of $f$ with probability at least $9/10$. The query complexity of this algorithm is $O(\epsilon^{-1}\log n)$. 
\end{theorem}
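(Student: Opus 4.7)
The plan has two parts: bounding the query complexity and verifying the $9/10$ success probability.

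Query complexity is immediate from the algorithm: $\Theta(1/\epsilon)$ samples from each of $L_t(l)$ and $R_t(l)$, summed over $\lceil\log(n-l)\rceil$ scales, is $O(\epsilon^{-1}\log n)$.

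For correctness, I would split based on how the \CumDensity\ mass $v_l=\sum_t \delta_{l,t}\ge\epsilon/12$ is distributed across scales. In the \emph{concentrated} case, some scale $t^*$ satisfies $\delta_{l,t^*}=\Omega(\epsilon)$, and I would run a median argument at that single scale. Let $\tau=\mathrm{median}\{f(i_2):(i_1,i_2)\in A_{l,t^*}\}$. Then at least $|A_{l,t^*}|/2$ pairs have $f(i_2)\le \tau$ and hence $f(i_1)<\tau$ (call their left endpoints ``low-left''), and at least $|A_{l,t^*}|/2$ have $f(i_2)>\tau$ (``high-right''). Because $l$ lies in the middle region of every pair in $A_{l,t^*}$, the low-left endpoints sit in (a slight thickening of) $L_{t^*}$ with density $\Omega(\delta_{l,t^*})=\Omega(\epsilon)$, and symmetrically for high-right endpoints in $R_{t^*}$. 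Choosing the implicit constant in $\Theta(1/\epsilon)$ large enough, a Chernoff bound gives that we sample a low-left $I$ and a high-right $J$ both with probability $\ge 9/10$; then $I<l<J$ combined with $f(I)<\tau<f(J)$ yields the $(1,2)$-pattern detected at \pref{line: check_tuple}.

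For the \emph{spread} case—every $\delta_{l,t}=O(\epsilon)$, yet $\sum_t \delta_{l,t}\ge\epsilon/12$—I would use a cross-scale variant: take a global threshold $\tau=\mathrm{median}\{f(i_2):(i_1,i_2)\in\bigcup_t A_{l,t}\}$ and label pairs low/high as before. Any sampled low-left from any $L_t$ paired with any sampled high-right from any $R_{t'}$ still sandwiches $l$ and $\tau$, so it would be a $(1,2)$-pattern. The combined expected count of low-lefts plus high-rights across all scales is $(2c/\epsilon)\sum_t \delta_{l,t}\ge c/6$, which is $\Omega(1)$ for appropriate $c$. The main obstacle I expect is precisely that this combined bound does not separately lower-bound the low-left and high-right counts: an adversarial $f$ could load all low pairs onto scales with large $|L_t|$ while pushing high pairs to scales with small $|R_t|$, so that one expectation dominates while the other collapses. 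Closing this gap will likely require either a cleverer threshold choice (randomized, or per-scale but globally consistent, that forces both $\sum_t n_t^{\mathrm{low}}/|L_t|$ and $\sum_t n_t^{\mathrm{high}}/|R_t|$ to be $\Omega(1)$) or an argument exploiting the disjointness of pairs across scales to bound how unbalanced $n_t^{\mathrm{low}}$ and $n_t^{\mathrm{high}}$ can be at each individual scale.
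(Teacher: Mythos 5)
Your query-complexity accounting and your concentrated case are correct, and the median-threshold argument at the single scale $t^*$ is actually more careful than the paper's own version: the paper's Case~2 simply asserts that a ``$1$'' sampled from $L_{t^*}(l)$ and a ``$2$'' sampled from $R_{t^*}(l)$ can be concatenated, which is not automatic (the two sampled indices may come from different pairs in $A_{l,t^*}$ with incomparable values); your threshold $\tau$ supplies exactly the missing comparability.

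The divergence, and your real gap, is in the spread case, and there the paper takes a structurally different route. It does not attempt a cross-scale threshold at all; it discards left endpoints entirely, thins the scales to $t\in\{3,6,9,\dots\}$ so that consecutive retained annuli differ by a factor of $8$, sets $D_t(l)=\{i_2:(i_1,i_2)\in A_{l,t}\}$, and asserts the growing-suffix ordering $f(b)<f(b')$ whenever $b\in D_t(l)$, $b'\in D_{t'}(l)$, $t<t'$. Under that assertion, any two samples landing in \emph{distinct} $D_t$'s form a $(1,2)$-pair on their own, and the bound $\sum_t T\,\delta_{l,t}\gtrsim 1$ with each $T\delta_{l,t}\le 1$ gives two hits across scales with constant probability --- which sidesteps your low-left/high-right balance problem entirely. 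However, the paper presents this cross-scale monotonicity of the $D_t$'s as following ``from the definition,'' and for an arbitrary set $U$ of disjoint increasing pairs it is simply false (a short pair near $l$ can have a larger right value than a much wider pair straddling $l$). It only holds under an ordering invariant on how $U$ is constructed, analogous to what Lemma~\ref{lem:ijk_monotone} proves for the third coordinate of greedily built $(1,3,2)$-tuples. So the obstacle you flagged is genuine, and you are also right that the fix must come from the construction of $U$ rather than from a cleverer threshold; the missing ingredient --- in your proposal and, effectively, in the paper's write-up --- is a greedy rule forcing the right endpoints that cross $l$ to have increasing $f$-values, after which the paper's scale-thinning and two-hits argument goes through directly.
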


%\subsection{Basic Properties}
%Before diving into the proof of this key structure, it requires to prove that the following property also holds.

%Let $I = T(1) \cup T(2) \cdots \cup T(k)$. 
%The proof of this lemma is deferred to the appendix.

%\begin{lemma}
%\end{lemma}

\subsection{Monotone Structure of $(1,3,2)$-pattern}\label{sec: monotone_1_3_2}
Before introducing the monotone structure for $(1,3,2)$-pattern, we firstly introduce a basic structure that is applicable to general pattern.
\begin{lemma}[Proposition 2.2 of \cite{newman2017testing}]\label{gap_lemma_1}
If $f$ is $\epsilon$-far from $\pi$-free, where $\pi$ is a permutation of $[K]$. Then there exists at least $\epsilon n/K$ disjoint permutation in the form of $\pi$.
\end{lemma}

We generate a set $T$ of disjoint $(1,3,2)$ subsequences of $f$ of length $3$, following the approach GreedyDisjointTuples proposed by \cite{ben2019finding}. The idea of GreedyDisjointTuples is that the algorithm greedily adds a minimum index to the subsequence, if this index could be used to form a $\pi$-pattern. The distinction is that we greedily construct disjoint tuples from right to left, while their approach greedily constructs disjoint tuples from left to right. Since this subroutine is very similar to the subroutine GreedyDisjointTuples, we call it GreedyDisjointTuples$+$. The monotone structure could be extracted from $(1,3,2)$-tuples using this algorithm.

Define $(i, j, k)\prec(i', j', k')$ if (i) $k < k'$ or (ii) $k = k'$ and $j < j'$ or (iii) $k = k', j = j'$ and $i < i'$. Then GreedyDisjointTuples$+$ greedily picks a $(1,3,2)$-tuple from $E(T_0)$ which is maximal under the above partial order in every step. The pseudo-code of this algorithm is shown in Algorithm \ref{alg: GreedyDisjointTuples+}.

\begin{algorithm}[h]
\caption{GreedyDisjointTuples$+$}
\label{alg: GreedyDisjointTuples+}
\begin{algorithmic}[1]
\STATE{Input: a function $f: [n]\rightarrow \mathbb R$, and a set $T_0$ of disjoint $(1,3,2)$ subsequences of $f$}
\STATE{Output: a set $T\subset [n]^{3}$ of disjoint subsequences with monotone structure.}
\STATE{Set $T\leftarrow \emptyset$, let $E(T_0) = \cup_{(i_1,i_2,i_3)\in T_0} \{i_1,i_2,i_3\}$ represent the set of indices of subsequences in $T_0$.}
\WHILE{$E(T_0)\setminus E(T)\neq\emptyset$}
\STATE{Let $k$ be the maximum element in $E(T_0)\setminus E(T)$.}
%\IF{there exists $(i_1,i_2)\in E(T_0)\setminus E(T)$ such that $(i_1,i_2,k)$ forms a $(1,3,2)$-tuple of $f$,}
%\STATE{select $j$ as the maximum element in $E(T_0)\setminus E(T)$ for which there exists $i_1$ such that $(i_1, j, k)$ forms a $(1,3,2)$-tuple of $f$.}
%\ENDIF
\STATE{Let $S = \{i_2: (i_1,i_2)\in E(T_0)\setminus E(T), \text{ and } (i_1,i_2,k) \text{ forms a } (1,3,2)\text{-tuple of } f\}$.}
\STATE{// select $j$ as the maximum element in $E(T_0)\setminus E(T)$ for which there exists $i_1$ such that $(i_1, j, k)$ forms a $(1,3,2)$-tuple of $f$.}
\IF{$S\neq\emptyset$}
%\IF{there exists $(i_1,i_2)\in E(T_0)\setminus E(T)$ such that $(i_1,i_2,k)$ forms a $(1,3,2)$-tuple of $f$}
\STATE{$j \leftarrow \max_{i_2\in S}i_2.$}
\ENDIF
\STATE{Let $M = \{i_1: i_1\in E(T_0)\setminus E(T) \text{ and } (i_1,j,k) \text{ forms a } (1,3,2)\text{-tuple of } f\}$.}
\IF{$M\neq\emptyset$}
%\IF{there exists $i\in E(T_0)\setminus E(T)$ such that $(i,j,k)$ forms a $(1,3,2)$-tuple of $f$}
\STATE{// select $i$ as the maximum element in $E(T_0)\setminus E(T)$ such that $(i,j,k)$ forms a $(1,3,2)$-tuple of $f$.}
\STATE{$i \leftarrow \max_{i_1\in M}i_1$.}
\ENDIF
\STATE{Update $T\leftarrow T\cup \{(i,j,k)\}$.}
\ENDWHILE
\end{algorithmic}
\end{algorithm}

The following lemma shows the monotone structure obtained using Algorithm~\ref{alg: GreedyDisjointTuples+}, which is the key for proposing an improved tester.

\begin{lemma}\label{lem:ijk_monotone}
Let $T$ be the output of the subroutine GreedyDisjointTuples$+$.
For a given $l$, consider the set of tuples $(i, j, k)\in T$ satisfying that $j\le l \le k$. We denote $A = \{(i, j, k)\in T: j\le l \le k\}$, then the elements $k$ such that $(i, j, k)\in A$ form an increasing sequence.
\end{lemma}

Note that if the algorithm generates disjoint tuples using the approach GreedyDisjointTuples, then the monotone structure illustrated in the above lemma might not hold. 

%Using the subroutine GreedyDisjointTuples$+$, the following lemma follows from Lemma $2.3$ in \cite{ben2019finding}?

%where $T$ is the set of $(1,3,2)$-tuples. 

%The following lemma could be directly inferred from Lemma~\ref{gap_lemma_1}.

%Lemma~\ref{gap_lemma_1} shows the existence of the disjoint tuples. We also need to show that using the GreedyDisjointTuples$+$ approach, the total number of disjoint tuples is at least ...

The property of the tuples generated using GreedyDisjointTuples$+$ shown in the following lemma is similar to that of GreedyDisjointTuples, which holds due to the greedy nature.

\begin{lemma}\label{gap_lemma_2}
Let $f: [n]\rightarrow \mathbb R$, and let $T_0\subset [n]^3$ be a set of disjoint $(1,3,2)$-tuples of $f$ with length $3$. Then there exist $c\in [2]$ and a set $T\subset [n]^3$ of disjoint $c$-gap $(1,3,2)$-tuples with $E(T)\subset E(T_0)$, satisfying that $|T|\ge|T_0|/6$, where $E(T) = \cup_{(i,j,k)\in T} \{i,j,k\}$ is the set of indices of subsequences in $T$.
\end{lemma}

We therefore analyze based on the following two cases: when $1$-gap $(1,3,2)$-tuples dominate and when $2$-gap $(1,3,2)$-tuples dominate.

%Proposition $2.1$ of \cite{newman2017testing}

\subsection{The algorithm for testing $(1,3,2)$-freeness}
In this section, we introduce efficient algorithms with the help of monotone tuple tester illustrated in Section~\ref{sec: mono_pattern_test}.

%Firstly we analyze the case when $1$-gap $(1,3,2)$-tuple dominates...
\subsubsection{$1$-gap tuples dominate}

If $1$-gap tuples dominate, then the algorithm firstly randomly selects an index $l$, and constructs the interval $L_t(l) = [l - 2^t,l]$, $R_t(l) = [l, l + 2^{t}]$ by enumerating $t$ from $1, 2,$ up to $\lceil\log n\rceil$. The algorithm then finds $i$ from $L_t(l)$, and finds $(j,k)$ from $R_t(l)$. The detailed approach is illustrated in Algorithm~\ref{Framework: testing $(1,3,2)$-freeness when 1-gap terminates}.

\begin{algorithm}[h]
\caption{testing $(1,3,2)$-freeness when $1$-gap tuples dominate}
\label{Framework: testing $(1,3,2)$-freeness when 1-gap terminates}
\begin{algorithmic}[1]
%\STATE{Sample $O(1/\epsilon)$ times to find an index $l$, satisfying that the $\CumDensity$ of $l$ is at least $\epsilon$.}
\FOR{$m = 1, 2, \cdots, \Theta(1/\epsilon)$}
\STATE{Select $l$ uniformly at random from $[n]$.}
\FOR{$t = 1, 2, \cdots, \lceil\log (n-l)\rceil$}
\STATE{$L_t(l) = [l-2^t, l], R_t(l) = [l, l + 2^{t}]$.}
%\FOR{$s = 1, 2, \cdots, \lceil\log n\rceil/\epsilon$}
%\STATE{Sample $i$ from $L_t(l)$ uniformly at random.}
%\IF{$\text{TestMonotoneEpoch}_{(2,1)}(f, R_t(l), \epsilon/\log n,l)$ returns $(j,k)$ and $f(i)<f(k)$}
%\STATE{return $(i,j,k)$}
%\ENDIF
%\ENDFOR
\IF{\text{TestMonotoneEpoch}$_{(2,1)}(f, R_t(l), \epsilon/\log n, l)$ returns $(j,k)$}
\STATE{Sample $i$ from $L_t(l)$ uniformly at random for $10\lceil\log n\rceil/\epsilon$ number of times.}
%\STATE{Find $k$ from $R_t(l) = [l, l+t]$ satisfying $f(i)<f(k)<f(j)$, using BinarySearch ($f, l, L_t(l), R_t(l)$).}
\IF{$f(i)<f(k)<f(j)$}
\STATE{return the $(1,3,2)$-tuple $(i,j,k)$.}
\ENDIF
\ENDIF
\ENDFOR
\ENDFOR
\STATE{return FAIL.}
\end{algorithmic}
\end{algorithm}

\subsubsection{$2$-gap tuples dominate}
When $2$-gap tuples dominate, our analysis relies on the clever definition of $\gamma$-deserted element introduced by \cite{newman2017testing}.

\begin{definition}
Given a set $S\subset I$, and a parameter $\gamma\in [0,1]$, an element $i\in S$ is called $\gamma$-deserted, if there exists an interval $J\subset I$ containing $i$ such that $|S\cap J|<\gamma |J|$.
\end{definition}

Firstly we will explain why similar approach proposed for the case when $1$-gap tuples dominate does not work in this case. Suppose we use similar approach as in the first case. Assume that the density of $k$ in $R_t(l)$ is at least $\Omega(\epsilon/\log(n))$, and the density of $(i,j)$ in $L_t(l)$ is at least $\Omega(\epsilon/\log(n))$. The blocks are divided into top, middle and bottom blocks based on the function value. The algorithm could firstly find $k$ such that $k$ lies in the middle block with high probability. Then, it requires to find a $(i,j)$-tuple from $L_t(l)$, satisfying that $f(j)>f(k)>f(i)$. We have no idea about how to design an efficient algorithm to find this form of tuple. For the algorithm $\text{TestMonotone}$ proposed in Section \ref{sec: mono_pattern_test}, the use of growing suffix structure makes it hard to be tuned to satisfy this requirement. Therefore, we design a distinct tester for this case as illustrated in Algorithm~\ref{Framework: testing $(1,3,2)$-freeness}.

For a given $l$, consider the tuples $(j, k)$ that cross $l$. That is, $j\le l \le k$. We have the following key observation: the set of $k$ form a monotone sequence. The monotone structure introduced in Lemma~\ref{lem:ijk_monotone} is the key for the proposal of a more efficient tester for $(1,3,2)$-pattern.

%\subsubsection{The Algorithm for testing $(1,3,2)$-freeness}

\begin{algorithm}[h]
\caption{testing $(1,3,2)$-freeness when $2$-gap tuples dominate}
\label{Framework: testing $(1,3,2)$-freeness}
\begin{algorithmic}[1]
\FOR{$m = 1,2,\cdots,\Theta(1/\epsilon)$}
%\STATE{Sample $O(1/\epsilon)$ times to find an index $l$, satisfying that the $\CumDensity$ of $l$ is at least $\epsilon$.}
\STATE{Select $l$ uniformly at random from $[n]$.}
\FOR{$t = 1, 2, \cdots, \lceil\log (n-l)\rceil$}
\STATE{$L_t(l) = [l-2^{t}, l] $.}
\IF{\text{TestMonotoneEpoch}$_{(1,2)}(f, L_t(l), \epsilon/\log n, l)$ returns $(i,j)$}
\STATE{$R_t(l) = [l, l+2^t]$.}
\IF{BinarySearch ($f, l, L_t(l), R_t(l)$) returns $k$ satisfying that $f(i)<f(k)<f(j)$}
\STATE{return the $(1,3,2)$-tuple $(i,j,k)$.}
\ELSE
\STATE{return FAIL.}
\ENDIF
%\STATE{Find $k$ from $R_t(l) = [l, l+t]$ satisfying $f(i)<f(k)<f(j)$, using BinarySearch ($f, l, L_t(l), R_t(l)$).}
\ENDIF
\ENDFOR
\ENDFOR
\end{algorithmic}
\end{algorithm}

The basic idea for testing $(1,3,2)$-freeness is: the algorithm firstly finds an index $l$, then finds a block with high density, which is divided into left block and right block by $l$. With these blocks, the algorithm finds $(i,j)$ from left block of $l$, and then adaptively finds $k$ from the right block of $l$.

%\subsection{Algorithm for testing $(1,3,2)$-freeness}

\begin{itemize}
\item \textbf{Find $l$:} We are interested in the index $l$ satisfying that its $\CumDensity$ is larger than $\epsilon$. The total number of such indexes is $\Omega(\epsilon\cdot n)$.

\item \textbf{Find interval:} Since the $\CumDensity$ of $l$ is at least $\epsilon$. There exists a width $t$ such that the density of $k$ in this block is at least $\delta_{l,t}\ge\epsilon/(\log n) = \epsilon_1$. We will focus on this interval. 

%class of $l$.

%The key element is that, we firstly find $i$ with probability at least $\epsilon_1$, and ensuring that the density of $(j,k)$ that could be concatenated with $i$ is at least $\epsilon$. 

\item \textbf{Find $(i,j)$:} The interval found in the last step ensures that the density of $(i,j)$ in $L_t(l)$ and $R_t(l)$ is at least $\epsilon_1$. By calling the TestMonotone algorithm, $(i,j)$-pattern could be found from $L_t(l)$ using $\tilde O_k(\log n)/\epsilon_1$ number of queries, with high probability.

\item \textbf{Find $k$ adaptively:} Since the density of $k$ in $R_t(l)$ is at least $\epsilon_1$, we could find $k$ adaptively using binary search.
\end{itemize}

%The detailed binary search approach is as follows. 

%\subsubsection{Binary Search}
Now we will illustrate the detailed binary search approach. Suppose that we find $i$ and $j$ ($f(i)<f(j)$) from $L_t(l)$, and want to complete them to form a $(1,3,2)$-pattern. Then we want to find the corresponding $k$-value ($f(i)<f(k)<f(j)$) using binary search over $R_t(l)$.

Firstly we consider a simpler case as a warm-up. Suppose all the coordinates in $R_t(l)$ form a monotone sequence. Then the standard process of binary search performs as illustrated in Algorithm~\ref{Framework: Standard Binary Search} (deferred to Appendix). At each iteration, suppose the search range is partitioned into three consecutive intervals. The algorithm wants to find an index that belongs to the middle block. Initially, $I = [a,b]$. If $f(x)\le f(i)$, then the block $I$ shrinks to $[x,b]$, the length of this block is less than $2/3 |I|$. If $f(x)\ge f(j)$, then the block $I$ shrinks to $[a,x]$, the length of this block is less than $2/3 |I|$.

However, it could not be guaranteed that all the coordinates in $R_t(l)$ form the monotone sequence. The property that helps is that the density of the monotone sequence in $R_t(l)$ is at least $\epsilon_1$. It guarantees that we could find a coordinate that belongs to the middle block with probability at least $\epsilon_1$, if we randomly sample a coordinate from $R_t(l)$.

%The question is that do we need to further identify which point belongs to the monotone sequence?

Assume that the algorithm randomly sample a coordinate from $R_t(l)$, then with high probability, the coordinate belongs to the middle block. The algorithm then updates the new search range. The length of this interval is less than $2|I|/3$, and the total number of monotone sequence is at least $|T|/3$, where $T$ represents the initial monotone sequence in the search range $R_t(l)$. Then the density of monotone sequence in the newly updated search range $I_1$ becomes $\epsilon_1/2$. In this way, it could only guarantee that the density of monotone sequence in the search range $I_h$ becomes $\epsilon_1/2^h$. The might lead to an algorithm with large query complexity. 

This issue could be resolved by the use of the $\gamma$-deserted elements, which was introduced by \cite{newman2017testing}. From the definition of $\gamma$-deserted elements, the density of the monotone sequence in any interval that contains this element is at least $\gamma$. Therefore, if the third element that we are searching for belongs to the non-$\gamma$-deserted elements, then it could be guaranteed that the density of the monotone sequence is always at least $\gamma$ in all iterations. 

%The next question is do we need to call the subroutine to obtain the coordinate that requires to be queried? Let assume that we do not call this subroutine. Then the algorithm could find a coordinate $x$ that belongs to the middle block of the search range with probability at least $\gamma$, since the density of monotone sequence in each iteration is at least $\gamma$. 

The next question is how do we find the coordinate that belongs to the monotone sequence in the middle block of the search range with high probability? If the algorithm naively performs random sampling to obtain a coordinate from $R_t(l)$, then with probability at least $\gamma$, the coordinate belongs to the monotone sequence in the middle block. However, the lower bound of this probability is very small. How to amplify this probability, and meanwhile ensure the requirement that the coordinate belongs to the monotone sequence in the middle block? We solve this problem by noting that the each coordinate of the monotone sequence has a corresponding coordinate with larger function value in $L_t(l)$. Since the density of monotone sequence in each iteration is at least $\gamma$, $f|_{(I_h\cup L_t(l))}$ is $\gamma$-far from $(2,1)$-free. If $f$ is $\gamma$-far from $(2,1)$-free, then the algorithm FindCoordinate (Algorithm~\ref{Framework: testing $(1,2)$-freeness findcoordinate}) could find an index that belongs to the monotone structure with probability at least $1 - 1/(10\log n)$, using $\tilde O(\gamma^{-1}\log n)$ number of queries. This subroutine is used to determine the coordinate to query in the process of random binary search. If $k$ is not $\gamma$-deserted in the monotone sequence, the density of the monotone subsequence in any interval that contains this index is at least $\gamma$. This detailed procedure for adaptively finding the third element is shown in Algorithm~\ref{alg: Random Binary Search}.

%do we need to call the subroutine to obtain the coordinate that requires to be queried?

%%%\paragraph{Failed Attempt:} Illustrate the naive idea, and give that example. 

%Let us explain how to conduct random binary search in detail. We will firstly introduce a subroutine similar to that proposed for testing $(1,2)$-freeness. This subroutine is designed to determine the coordinate to query in the process of random binary search. The algorithm either returns a coordinate $x$ or returns "fail". Suppose there is a monotone increasing sequence with length at least $\epsilon |I|$. Then this subroutine could find an index that belongs to the monotone sequence in $O(1/\epsilon)$ number of queries.

\begin{algorithm}[h]
\caption{FindCoordinate$(f, [n], \epsilon, l)$}
\label{Framework: testing $(1,2)$-freeness findcoordinate}
\begin{algorithmic}[1]
%\STATE{Select $l$ uniformly at random from $[n]$.}
\FOR{$t = 1, 2, \cdots, \lceil\log (n-l)\rceil$}
\STATE{Consider the intervals $L_t(l) = [l-r_{t}, l]$ and $R_t(l) = [l, l+r_t]$, where $r_t = 2^t$.}
\STATE{Sample from $R_t(l)$ uniformly at random for $\Theta(\log\log n/\epsilon)$ times. Sample from $L_t(l)$ uniformly at random for $\Theta(\log\log n/\epsilon)$ times.}%T = \Theta(1/\epsilon
\IF{a $(1,2)$-tuple is found}
\STATE{return the second element.} 
\ENDIF
\ENDFOR\\
%//{The algorithm has not found any $(1,2)$-tuple}
\STATE{return FAIL.} 
\end{algorithmic}
\end{algorithm}

\iffalse
Suppose that the monotone sequence is partitioned into three blocks. At each iteration, we want to find an element that belongs to the middle block. Since the density of the middle block is at least $\epsilon/3$, the algorithm could find an element that belongs to the middle block in $O(1/\epsilon)$ number of queries with high constant probability. Define $I = [a,b]$.

\begin{itemize}
\item If $f(x)\le f(i)$, then the interval $I$ shrinks to $[a,x]$, the length of this interval is less than $2/3 |I|$.

\item If $f(x)\ge f(j)$, then the interval $I$ shrinks to $[x,b]$, the length of this interval is less than $2/3 |I|$.
\end{itemize}
\fi

\begin{algorithm}[h]
\caption{BinarySearch ($f, l, L_t(l), R_t(l)$)}
\label{alg: Random Binary Search}
\begin{algorithmic}[1]
\STATE{Initialization: $h\leftarrow 1, I_h\leftarrow R_t(l)$.}
\FOR{$m = 1,2, \cdots, \Theta(\log n)$}
%\STATE{//Partition the search range constituted by the monotone sequence into three blocks. At each iteration, we want to find an index that belongs to the middle block.}
\STATE{Call the subroutine FindCoordinate $(f, L_t(l)\cup I_h, \epsilon/\log n, l)$.}
\IF{this subroutine returns FAIL}
\STATE{obtain $x$ by randomly sampling a coordinate from $I_h$.}
\ELSE
\STATE{obtain $x$ from the subroutine FindCoordinate.}
\ENDIF
\IF{$f(x)\le f(i)$}
%\STATE{//the interval $I_{h+1}$ shrinks to $[a,x]$, the size of this interval is less than $2/3 |I|$}
\STATE{update $I_{h+1}\leftarrow [x,b]$.}\label{line: update_l}
\ELSIF{$f(x)\ge f(j)$}
%\STATE{//the interval $I_{h+1}$ shrinks to $[x,b]$, the size of this interval is less than $2/3 |I|$}
\STATE{update $I_{h+1}\leftarrow [a,x]$.}\label{line: update_r}
\ELSE
\STATE{return $x$.}
\ENDIF
\STATE{$h\leftarrow h+1$.}
\ENDFOR
\STATE{return FAIL.}
\end{algorithmic}
\end{algorithm}

\subsection{Analysis of tester for $(1,3,2)$-pattern}\label{sec: analysis_tester_1_3_2}

\subsubsection{$1$-gap tuples dominate}
%Now we present the analysis for Algorithm~\ref{Framework: testing $(1,3,2)$-freeness when 1-gap terminates}.

When $1$-gap tuples dominate, algorithm~\ref{Framework: testing $(1,3,2)$-freeness when 1-gap terminates} outputs a $(1,3,2)$-tuple with high probability within $O(\epsilon^{-2}\log^3 n)$ number of queries.

\begin{lemma}\label{lem_gap_1_dominates}
For any $\epsilon>0$, given query access to a function $f:[n]\rightarrow \mathbb R$ which is $\epsilon$-far from $(1,3,2)$-free. For the case when $1$-gap tuples dominate, algorithm~\ref{Framework: testing $(1,3,2)$-freeness when 1-gap terminates} outputs a $(1,3,2)$ subsequence of $f$ with probability at least $19/20$. The query complexity of this algorithm is $O(\epsilon^{-2}\log^3 n)$.
\end{lemma}

\subsubsection{$2$-gap tuples dominate}
%Now we present the analysis for Algorithm~\ref{Framework: testing $(1,3,2)$-freeness}.

Now we will focus on the case when $2$-gap $(1,3,2)$-tuples dominate. Our analysis relies on the following property about $\gamma$-deserted element.
%When $2$-gap tuples dominate, our analysis relies on the following property about $\gamma$-deserted element.% introduced by \cite{newman2017testing}

\iffalse
\begin{definition}
Given a set $S\subset I$, and a parameter $\gamma\in [0,1]$, an element $i\in S$ is called $\gamma$-deserted, if there exists an interval $J\subset I$ containing $i$ such that $|S\cap J|<\gamma |J|$.
\end{definition}
\fi

\begin{lemma}[\cite{newman2017testing}]\label{lem: gamma_deserted_lemma}
Let $S\subset I$ with $|S|\ge\epsilon |I|$. For every $\gamma<1$, at most $3\gamma(1-\epsilon)|I|/(1-\gamma)$ indices of $S$ are $\gamma$-deserted.
\end{lemma}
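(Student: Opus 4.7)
My plan is to prove the bound via a one-dimensional Vitali-type covering argument paired with a volume accounting for disjoint bad intervals. I will first, for each deserted element $i\in D$, fix a witnessing bad interval $J_i\subset I$ with $i\in J_i$ and $|S\cap J_i|<\gamma|J_i|$. Then I will apply the 1D Vitali covering lemma to $\{J_i\}_{i\in D}$: processing the intervals in decreasing order of length and greedily selecting each that is disjoint from all previously selected, this will produce a pairwise disjoint subfamily $K_1,\dots,K_M$ of bad intervals with the property that every $J_i$ overlaps some $K_j$ satisfying $|K_j|\ge|J_i|$; hence $J_i\subset 3K_j$, the concentric interval of length $3|K_j|$. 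In particular, $D\subset\bigcup_{j}3K_j$.

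Next, I will bound $\sum_j|K_j|$ using disjointness and badness. The former gives $\sum_j|S\cap K_j|<\gamma\sum_j|K_j|$, and combining with $|S|\ge\epsilon|I|$ via $|S|=|S\cap\bigcup_jK_j|+|S\setminus\bigcup_jK_j|$ together with $|S\setminus\bigcup_jK_j|\le|I|-\sum_j|K_j|$ yields, after rearrangement,
\[
\textstyle\sum_j|K_j|\le(1-\epsilon)|I|/(1-\gamma).
\]

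The final step is to combine the cover $D\subset\bigcup_j 3K_j$ with the density bound $|S\cap K_j|<\gamma|K_j|$ to conclude $|D|\le 3\gamma\sum_j|K_j|$, which immediately yields the stated bound. This is where I expect the main obstacle to lie: the trivial estimate $|D|\le|S\cap\bigcup_j 3K_j|\le 3\sum_j|K_j|$ only recovers the weaker bound $3(1-\epsilon)|I|/(1-\gamma)$, missing the crucial factor $\gamma$. To obtain this factor I plan to exploit that every point of $D$ lying in a dilated block $3K_j$ is itself captured by a bad sub-witness of length at most $|K_j|$ (by the greedy maximality of the $K_j$), and then propagate the density constraint $|S\cap J|<\gamma|J|$ from the central third $K_j$ outward to the flanking regions $3K_j\setminus K_j$ through a careful disjointness accounting on these sub-witnesses.
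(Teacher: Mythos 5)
The paper does not actually prove this lemma; it is quoted verbatim from \cite{newman2017testing}, so there is no in-paper proof to compare against. Evaluating your proposal on its own terms: the first three steps are sound. Fixing a witness $J_i$ for each $i$ in the deserted set $D$, running the length-ordered greedy (Vitali) selection to get disjoint $K_1,\dots,K_M$ with $D\subset\bigcup_j 3K_j$, and deriving $\sum_j|K_j|\le(1-\epsilon)|I|/(1-\gamma)$ from $|S\cap\bigcup_j K_j|<\gamma\sum_j|K_j|$ together with $|S|\ge\epsilon|I|$ are all correct.

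The genuine gap is exactly the one you flag: Step~4, obtaining $|D|\le 3\gamma\sum_j|K_j|$. The Vitali selection gives you containment $D\subset\bigcup_j 3K_j$ and a density bound $|S\cap K_j|<\gamma|K_j|$ on the \emph{cores} only; it gives you no control over $|S\cap(3K_j\setminus K_j)|$, and those annuli are exactly where the trivial bound loses the factor $\gamma$. Your proposed remedy --- ``every point of $D$ in $3K_j$ has a bad sub-witness of length at most $|K_j|$, then propagate the density constraint outward'' --- does not close this. A point $i\in D\cap(3K_j\setminus K_j)$ does carry its own witness $J_i$ with $|S\cap J_i|<\gamma|J_i|$, but $J_i$ need not be assigned to $K_j$ at all (it may overlap a different $K_{j'}$), the dilated blocks $3K_j$ overlap one another, and the sub-witnesses $J_i$ inside a fixed $3K_j$ overlap heavily, so summing $|S\cap J_i|$ over them massively overcounts. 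There is no disjointness available to do the ``careful disjointness accounting'' you invoke, and the Vitali structure does not manufacture it.

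The missing idea is a \emph{bounded-overlap} cover rather than a dilation bound. Write $U=\bigcup_{i\in D}J_i$ and decompose $U$ into its connected components $U_1,\dots,U_m$. On each $U_r$, perform the standard one-dimensional greedy sweep of $U_r$ by the witnesses $J_i\subset U_r$: pick $J_{i_1}$ extending farthest right from the left endpoint, then $J_{i_2}$ extending farthest right among those covering the right endpoint of $J_{i_1}$, and so on. This produces a sub-cover in which every point of $U_r$ lies in at most two of the chosen intervals (alternate selections are pairwise disjoint), hence
$|S\cap U_r|\le\sum_t|S\cap J_{i_t}|<\gamma\sum_t|J_{i_t}|\le 2\gamma|U_r|$.
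Summing gives $|D|\le|S\cap U|<2\gamma|U|$, and the same volume accounting you used in Step~3 applied to $U$ directly gives $|U|\le(1-\epsilon)|I|/(1-2\gamma)$, yielding $|D|<2\gamma(1-\epsilon)|I|/(1-2\gamma)$. This is incomparable to the stated bound $3\gamma(1-\epsilon)|I|/(1-\gamma)$ (tighter for small $\gamma$, but invalid for $\gamma\ge 1/2$), so it is not a proof of the lemma as literally phrased for all $\gamma<1$; still, it illustrates that the factor of $\gamma$ enters through bounded overlap of actual witnesses, not through a density estimate on Vitali dilates. To recover the exact constants $3$ and $1-\gamma$ of the original lemma you would need to reproduce Newman et al.'s argument, which the present paper does not supply; as written, your Step~4 is not a proof.
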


\iffalse
\begin{lemma}\label{lem:rtl_and_gamma_deserted}
Let $S$ be the monotone sequence in $R_t(l)$. Then the density of non-$\gamma$-deserted elements in $S$ is at least $1-1/(\zeta\log n)$, where $\gamma = \epsilon/(\eta\log^2 n)$, $\eta$ and $\zeta$ are constants.
\end{lemma}
\fi

\begin{lemma}\label{lem:rtl_and_gamma_deserted}
Suppose the density of monotone sequence in $L_{t^*}(l)\cup R_{t^*}(l)$ is at least $\frac{\epsilon}{36\log n}$, for a fixed index $l$ and an integer ${t^*}$. Let $S$ be the monotone sequence in $R_{t^*}(l)$. Then the density of non-$\gamma$-deserted elements in $S$ is at least $1-1/(\zeta\log n)$, where $\gamma = \epsilon/(\eta\log n)$, where $\eta = 936$ and $\zeta = 52$ are constants.
\end{lemma}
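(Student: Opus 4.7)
The plan is to invoke Lemma~\ref{lem: gamma_deserted_lemma} of \cite{newman2017testing} directly on the ambient interval $R_{t^*}(l)$, with $S$ playing the role of the dense subset. The only real work is (i) translating the hypothesized $(L_{t^*}(l)\cup R_{t^*}(l))$-density into a lower bound on $|S|/|R_{t^*}(l)|$, and (ii) checking that the resulting bound on the number of $\gamma$-deserted elements is at most $|S|/(\zeta\log n)$.

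\textbf{Step 1 (density of $S$ in $R_{t^*}(l)$).} By Lemma~\ref{lem:ijk_monotone} and the construction in Algorithm~\ref{alg: GreedyDisjointTuples+}, the ``monotone sequence'' under consideration is the increasing sequence of third coordinates $k$ of the $2$-gap $(1,3,2)$-tuples $(i,j,k)$ with $j\le l\le k$; all such $k$'s lie entirely inside $R_{t^*}(l)$, while the corresponding $j$'s lie inside $L_{t^*}(l)$. Since $|L_{t^*}(l)|=|R_{t^*}(l)|=2^{t^*}$, the assumption that this monotone sequence has density at least $\epsilon/(36\log n)$ inside $L_{t^*}(l)\cup R_{t^*}(l)$ yields $|S|/|R_{t^*}(l)|\ge \epsilon' := \epsilon/(18\log n)$.

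\textbf{Step 2 (apply the deserted bound).} Apply Lemma~\ref{lem: gamma_deserted_lemma} to $S\subseteq R_{t^*}(l)$: the number of $\gamma$-deserted elements of $S$ is at most $3\gamma(1-\epsilon')|R_{t^*}(l)|/(1-\gamma)$. Dividing by $|S|\ge \epsilon'|R_{t^*}(l)|$ shows that the fraction of $\gamma$-deserted elements inside $S$ is at most $3\gamma(1-\epsilon')/\bigl((1-\gamma)\epsilon'\bigr)$. Plugging in $\gamma=\epsilon/(\eta\log n)$ with $\eta=936$, $\epsilon'=\epsilon/(18\log n)$, and using $\gamma\le 1/2$ together with $1-\epsilon'\le 1$, a direct arithmetic check bounds this fraction by $1/(\zeta\log n)$ with $\zeta=52$; taking complements gives the claim.

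\textbf{Main obstacle.} The one genuinely nontrivial step is the first: the phrase ``monotone sequence in $L_{t^*}(l)\cup R_{t^*}(l)$'' must be interpreted so that its elements lie entirely inside $R_{t^*}(l)$, since otherwise the $R$-density could in principle be as small as half of the $(L\cup R)$-density and the final constants would deteriorate. This is guaranteed precisely by the greedy construction of Algorithm~\ref{alg: GreedyDisjointTuples+}, together with Lemma~\ref{lem:ijk_monotone}, which identifies the monotone sequence with the increasing chain of $k$'s. After this translation, the lemma reduces to careful bookkeeping with the constants of Newman's deserted bound.
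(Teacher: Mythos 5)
Your approach matches the paper's almost exactly: both proofs apply Lemma~\ref{lem: gamma_deserted_lemma} to bound the number of $\gamma$-deserted elements of $S$, then divide by the density lower bound $|S|\ge \epsilon'|I|$ to convert this into a \emph{fraction} of $S$. The only cosmetic difference is that you take the ambient interval to be $R_{t^*}(l)$ (after converting the $(L\cup R)$-density to an $R$-density, picking up a factor of $2$), whereas the paper applies the lemma with $I = L_{t^*}(l)\cup R_{t^*}(l)$ directly; these are equivalent up to that constant.

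However, the ``direct arithmetic check'' you invoke at the end does not actually produce the claimed bound, and this is a genuine gap (one that the paper shares, but you should still flag it since you explicitly assert the check succeeds). With $\gamma=\epsilon/(936\log n)$ and $\epsilon'=\epsilon/(18\log n)$, the two $\log n$ factors cancel in the ratio $\gamma/\epsilon' = 18/936 = 1/52$, so the fraction of $\gamma$-deserted elements is bounded by
\begin{align}
\frac{3\gamma(1-\epsilon')}{(1-\gamma)\,\epsilon'} \;\le\; \frac{3}{52}\cdot\frac{1}{1-\gamma} \;\approx\; \frac{3}{52},
\end{align}
which is a \emph{constant} independent of $n$, not $1/(52\log n)$. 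To obtain a bound of the form $1/(\zeta\log n)$ one needs $\gamma$ to be smaller by an additional factor of $\log n$ relative to $\epsilon'$ --- i.e.\ $\gamma = \epsilon/(\eta\log^2 n)$ for a suitable constant $\eta$ (which is in fact the form that appears in a commented-out earlier version of this lemma in the source). As written, with $\gamma = \epsilon/(\eta\log n)$, neither your bookkeeping nor the paper's yields $1/(52\log n)$; the claim as stated does not follow from the computation you describe.
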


%\begin{proof}
%Define $m = |L_t(l)\cup R_t(l)|$. 
%\end{proof}

%From Lemma~\ref{lem:rtl_and_gamma_deserted} we know that, the density of non-$\gamma$-deserted elements in the monotone sequence is at least $(1-1/(10\log n))$.

%Lemma \ref{lem:rtl_and_gamma_deserted} informs that 

Therefore, we could focus on the $(1,3,2)$-tuples satisfying that the third element is not $\gamma$-deserted in $R_t(l)$. The next lemma shows the correctness and query complexity of binary search as shown in Algorithm $\ref{alg: Random Binary Search}$.

%Denote by $\mathcal E$ the event that a non-$\gamma$-deserted element is selected in $1/\gamma$ iterations.

\iffalse
\begin{lemma}\label{binary search lemma}
Given $(i,j)$, $i<j$ and $f(i)<f(j)$, $i, j\in L_t(l)$, for a fixed index $l$ and an integer $t$. With probability at least $25/26$, Algorithm \ref{alg: Random Binary Search} outputs an index $k$ satisfying that $(i,j,k)$ forms a $(1,3,2)$-pattern. The query complexity of this algorithm is $\tilde O(\epsilon^{-1}\log^4 n)$.
\end{lemma}
\fi

\begin{lemma}\label{binary search lemma}
Suppose the density of monotone sequence in $L_{t^*}(l)\cup R_{t^*}(l)$ is at least $\frac{\epsilon}{36\log n}$, for a fixed index $l$ and an integer ${t^*}$. Given $(i,j)$, $i<j$ and $f(i)<f(j)$, $i, j\in L_{t^*}(l)$. With probability at least $25/26$, Algorithm \ref{alg: Random Binary Search} outputs an index $k$ satisfying that $(i,j,k)$ forms a $(1,3,2)$-pattern. The query complexity of this algorithm is $O(\epsilon^{-1}\log^3 n)$.%$\tilde O(\epsilon^{-1}\log^3 n)$.
\end{lemma}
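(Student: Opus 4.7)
The plan is to track a single target coordinate $k^\ast$ through the binary search, arguing that it stays in $I_h$ across all iterations and is eventually returned. Let $S$ denote the monotone subsequence of $R_{t^*}(l)$ produced by Lemma~\ref{lem:ijk_monotone}. First, I would appeal to Lemma~\ref{lem:rtl_and_gamma_deserted} with $\gamma=\epsilon/(\eta\log n)$ to restrict attention to $k^\ast\in S$ that (i) is non-$\gamma$-deserted in $S$ and (ii) satisfies $f(i)<f(k^\ast)<f(j)$. Existence of such a $k^\ast$ follows because $|S|\ge\Omega(\epsilon|R_{t^*}(l)|/\log n)$ and only a $1/(\zeta\log n)$-fraction of $S$ is $\gamma$-deserted, while $(i,j)$ was produced precisely so that some $k\in R_{t^*}(l)$ completes a $(1,3,2)$-tuple with it, which by the greedy construction must lie in $S$.

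The central invariant I would maintain is: $k^\ast\in I_h$, so by non-$\gamma$-desertion the density of $S$ inside $I_h$ is at least $\gamma$. Under this invariant, each call FindCoordinate$(f, L_{t^*}(l)\cup I_h,\epsilon/\log n, l)$ is essentially the $(2,1)$-variant of Theorem~\ref{lem_monotone_two}: every $k\in S\cap I_h$ is paired with some $j'\in L_{t^*}(l)$ having $f(j')>f(k)$, so the $(2,1)$-density seen from the anchor $l$ is at least $\gamma$. Because the subroutine samples $\Theta(\log\log n/\epsilon)$ times per scale rather than $\Theta(1/\epsilon)$, its success probability is boosted from $9/10$ to $1-1/\Theta(\log^2 n)$, which allows a clean union bound over the $\Theta(\log n)$ binary-search iterations.

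Conditional on FindCoordinate returning an $x\in S\cap I_h$, correctness of the update rule follows from the monotonicity of $S$: if $f(x)\le f(i)<f(k^\ast)$, the ordering of $S$ forces $x<k^\ast$ in position, so $k^\ast\in[x,b]=I_{h+1}$, and symmetrically when $f(x)\ge f(j)$. In the remaining case $f(i)<f(x)<f(j)$, the tuple $(i,j,x)$ is a valid $(1,3,2)$-pattern and the algorithm returns. For termination within $\Theta(\log n)$ iterations I would track $|S\cap I_h|$ as the potential: with density kept at $\ge\gamma$, a constant-probability near-median event on the returned $x$ (inherited from the uniform-per-scale sampling inside FindCoordinate around the anchor $l$) halves the surviving monotone fragment, so after $\Theta(\log n)$ halvings only $k^\ast$ remains and the middle branch must trigger.

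Multiplying $\Theta(\log n)$ iterations by the $O(\gamma^{-1}\log n)=O(\epsilon^{-1}\log^2 n)$ cost of each FindCoordinate call gives the claimed $O(\epsilon^{-1}\log^3 n)$ query bound, and the product of per-iteration success probabilities of the form $1-1/\Theta(\log^2 n)$ stays above $25/26$ once constants are chosen. The main obstacle I expect is the median-style termination argument in the previous paragraph: guaranteeing that the returned $x$ does not sit systematically near the boundary of $I_h$. I would handle this by making $|S\cap I_h|$, rather than $|I_h|$, the progress measure, and combining the anchor-based uniformity of FindCoordinate with the non-$\gamma$-desertion of $k^\ast$ to get a constant expected shrinkage of this potential per successful iteration.
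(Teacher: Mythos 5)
Your proposal follows the same high-level route as the paper's proof: restrict to a non-$\gamma$-deserted target $k^\ast$, maintain the invariant $k^\ast\in I_h$ so that by Lemma~\ref{lem: gamma_deserted_lemma}/\ref{lem:rtl_and_gamma_deserted} the density of the monotone set in $L_{t^*}(l)\cup I_h$ stays at least $\gamma$, treat FindCoordinate as a $(2,1)$-tester, take a union bound over $\Theta(\log n)$ iterations, and multiply by the per-call cost $O(\gamma^{-1}\log n)$ to get $O(\epsilon^{-1}\log^3 n)$. The invariant-maintenance step you spell out (if $f(x)\le f(i)$ then monotonicity of $S$ forces $x<k^\ast$, etc.) is exactly what the update rules in lines~\ref{line: update_l}--\ref{line: update_r} of Algorithm~\ref{alg: Random Binary Search} require, and the paper leaves it implicit.

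The genuine gap is the termination argument, and you are right to single it out as the main obstacle. The paper's proof defines $\mathcal{E}_h$ as ``FindCoordinate returns some element of $A$ at iteration $h$'' and then writes $\Pr(\text{binary search succeeds}) = \Pr(\cap_h \mathcal{E}_h)$. That identity is not justified: all $\mathcal{E}_h$ holding only guarantees that every returned $x$ lies on the monotone sequence, not that after $\Theta(\log n)$ iterations one of them lands in the middle band $f(i)<f(x)<f(j)$. Your proposed fix --- use $|S\cap I_h|$ as a potential and argue constant expected shrinkage from ``anchor-based uniformity'' of FindCoordinate --- does not close this. FindCoordinate (Algorithm~\ref{Framework: Testing $(1,2)$-pattern findcoordinate}) is a geometrically growing scan anchored at $l$ with early exit as soon as any $(1,2)$- or $(2,1)$-witness appears at some scale $t$. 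Its output is therefore biased toward small scales near $l$, i.e.\ toward one end of $I_h$ and toward the small-$f$ end of the increasing sequence $S$, not toward a median of $S\cap I_h$. With such a bias, the update $I_{h+1}\leftarrow[x,b]$ can shave off an arbitrarily small fraction of $S\cap I_h$, so neither $|I_h|$ nor your potential $|S\cap I_h|$ is guaranteed to contract geometrically. To make a $\Theta(\log n)$-iteration bound rigorous one would need either a FindCoordinate variant that returns an (approximately) uniformly random element of $S\cap I_h$, or a deterministic trisection step (as in the warm-up Algorithm~\ref{Framework: Standard Binary Search}) followed by a local search for a nearby non-deserted element, and the proposal as written supplies neither. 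This is a gap shared with the paper's own proof, but since your write-up explicitly leans on a shrinkage property that FindCoordinate does not provide, the step as stated would fail.
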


%\subsection{Estimate the sample complexity}

\begin{lemma}\label{lem_gap_2_dominate}
For any $\epsilon>0$, given query access to a function $f:[n]\rightarrow \mathbb R$ which is $\epsilon$-far from $(1,3,2)$-free. For the case when $2$-gap tuples dominate, Algorithm~\ref{Framework: testing $(1,3,2)$-freeness when 1-gap terminates} outputs a $(1,3,2)$ subsequence of $f$ with probability at least $19/20$. The query complexity of this algorithm is $O(\epsilon^{-2}\log^4 n)$.
\end{lemma}

With the above results, we are now ready to prove the conclusion as shown in Theorem~\ref{thm: test_132}. The missing proofs in this section are deferred to appendix.

\begin{theorem}\label{thm: test_132}[Restatement of Theorem~\ref{thm: test_132_main_result}]
For any $\epsilon>0$, there exists an adaptive algorithm that, given query access to a function $f:[n]\rightarrow \mathbb R$ which is $\epsilon$-far from $(1,3,2)$-free, outputs a $(1,3,2)$ subsequence of $f$ with probability at least $9/10$. The query complexity of this algorithm is $O(\epsilon^{-2}\log^4 n)$.
\end{theorem}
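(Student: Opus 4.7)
The plan is to reduce the theorem to the two cases already analyzed in \pref{lem_gap_1_dominates} and \pref{lem_gap_2_dominate}, and then simply run the two corresponding algorithms back-to-back. First I would invoke \pref{gap_lemma_1} to extract from $f$ at least $\epsilon n/3$ disjoint $(1,3,2)$-tuples $T_0$, and then apply \pref{gap_lemma_2} to pass to a subfamily $T\subseteq T_0$ of disjoint $c$-gap $(1,3,2)$-tuples with $|T|\ge |T_0|/9\ge \epsilon n/27$, where $c\in\{1,2\}$. Since $T$ is partitioned by gap, by pigeonhole either the $1$-gap tuples or the $2$-gap tuples account for at least half of $T$, i.e.\ contribute at least $\epsilon n/54$ disjoint tuples. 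Call this a density of $\epsilon'=\Theta(\epsilon)$.

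Next, the tester itself would run Algorithm~\ref{Framework: Testing $(1,3,2)$-pattern when 1-gap terminates} and Algorithm~\ref{Framework: Testing $(1,3,2)$-pattern} one after the other on $(f,\epsilon')$, returning the first $(1,3,2)$-tuple produced (and otherwise rejecting). Because the tester has one-sided error, a ``fail'' outcome from the algorithm designed for the wrong case is harmless: if $f$ is $(1,3,2)$-free neither subroutine can ever output a tuple, so the tester is correct in that direction. For correctness in the $\epsilon$-far direction, I would simply apply the relevant lemma: if $1$-gap tuples dominate, \pref{lem_gap_1_dominates} applied with parameter $\epsilon'$ guarantees success probability $\ge 19/20$; and if $2$-gap tuples dominate, \pref{lem_gap_2_dominate} gives the same guarantee. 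Either way the overall success probability is at least $19/20\ge 9/10$, so no union bound is required across the two cases, only across the algorithm that actually matches.

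For the query bound, I would just add: \pref{lem_gap_1_dominates} costs $O(\epsilon^{-2}\log^3 n)$ and \pref{lem_gap_2_dominate} costs $O(\epsilon^{-2}\log^4 n)$, and since $\epsilon'=\Theta(\epsilon)$, the total is dominated by $O(\epsilon^{-2}\log^4 n)$, matching the stated bound.

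The only mildly subtle point—and where I would be careful—is not a real obstacle but a bookkeeping issue: I must verify that ``$1$-gap dominates'' and ``$2$-gap dominates'' as used inside \pref{lem_gap_1_dominates} and \pref{lem_gap_2_dominate} really follow from the pigeonhole step above with only a constant-factor loss in $\epsilon$, so that the density parameter $\epsilon/\log n$ threaded through the inner calls to $\text{TestMonotoneEpoch}_{(1,2)}$, $\text{TestMonotoneEpoch}_{(2,1)}$, and the binary search remains $\Theta(\epsilon/\log n)$. Once that is checked, the theorem follows immediately from the two case lemmas.
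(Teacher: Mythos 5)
Your proposal follows the same route as the paper: combine \pref{lem_gap_1_dominates} and \pref{lem_gap_2_dominate} by running both case algorithms and arguing via one-sidedness that only the matching case matters. Your probability bookkeeping is in fact slightly cleaner than the paper's—the paper writes $1-1/20-1/20=9/10$ as if both subroutines must succeed, whereas you correctly observe that only the dominating-case subroutine needs to succeed, giving $19/20\ge 9/10$—but the conclusion and the query-complexity arithmetic are identical.
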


\section{Conclusion}
We firstly present a simple as well as efficient algorithm for testing monotone pattern with length $2$. We then provide an algorithm that achieves significant improvement in terms of the query complexity for testing $(1,3,2)$-tuple. It is worth investigating whether the newly proposed structure could be generalized to other forms? The interesting directions include designing a unified algorithm for any form of tuples, and proposing a simple adaptive algorithm for monotone-pattern testing that achieves optimal query complexity. 

% \section{Acknowledgements}
% {We want to thank Ilan Newman for suggesting us some recent works on this problem. We would like to thank Omri Ben-Eliezer for many valuable discussions and for his sincere encouragements.}

\bibliographystyle{elsarticle-num}
\bibliography{references}
\newpage
\appendix

\section{Standard Binary Search Algorithm}
\begin{algorithm}[h]
\caption{Standard Binary Search}
\label{Framework: Standard Binary Search}
\begin{algorithmic}[1]
\STATE{Initialization: $h\leftarrow 1, I_h\leftarrow [a,b]$.}
\STATE{Let $x = a + |I_h|/3$.}
\IF{$f(x)\le f(i)$}
\STATE{update $I_{h+1}\leftarrow [x,b]$.}
\ELSIF{$f(x)\ge f(j)$}
\STATE{update $I_{h+1}\leftarrow [a,x]$.}
\ENDIF
\STATE{$h\leftarrow h+1$}
\end{algorithmic}
\end{algorithm}

\section{Missing proofs in section~\ref{sec: sec_alg_1_3_2_main}}

\subsection{Proof of Lemma~\ref{density_l_epsilon}}\label{proof_sec_6_1}
\iffalse
\begin{lemma}
The ratio of $l\in [n]$ satisfying that $v_l\ge\epsilon/2$ is at least $\epsilon/2$.
\end{lemma}
\fi

\begin{lemma}
If $f$ is $\epsilon$-far from $(1,2)$-free, then the ratio of $l\in [n]$ satisfying that $v_l\ge\epsilon/24$ is at least $\epsilon/24$.
\end{lemma}

\begin{proof}
%The summation of length of the tuples in $[n]$ is equal to the summation of the number of tuples that cross $l$ for $l\in [n]$. The equation is as follows:
From the definition of $A_{l,t}$, we have
\begin{align}
\sum_{l = 1}^{n}\sum_{t=0}^{\lfloor\log(n)\rfloor} |A_{l, t}|\ge\frac{1}{3}\sum_{t = 0}^{\lfloor\log(n)\rfloor} n_t\cdot 2^{t},
\end{align}
\noindent where $n_t$ is the number of tuples with $i_{2} - i_1 \in [2^{t}, 2^{t+1}]$.

\noindent Since $f$ is $\epsilon$-far from $(1,2)$-free, we have that
\begin{align}
    \Pr_{i\in [n]} [f(i)\neq g(i)]\ge\epsilon,
\end{align}
for any $(1,2)$-free function $g$.
From Lemma~\ref{gap_lemma_1}, we have that
\begin{align}
    \sum_{t = 0}^{\lfloor\log(n)\rfloor} n_t \ge \epsilon n/2.
\end{align}

\noindent Thus
\begin{align}
\sum_{l\in [n]}\sum_{t = 0} ^ {\lfloor\log(n)\rfloor} \delta_{l, t}\ge\sum_{t = 0}^{\lfloor\log(n)\rfloor} n_t/6 \ge\epsilon\cdot n/12.
\end{align}

\noindent Therefore, 
\begin{align}
\sum_{l\in [n]}v_l\ge\epsilon\cdot n/12.
\end{align}

\noindent Define $A = \{l\in [n]:v_l\ge \epsilon/24\}$, and $B = [n]\setminus A$. Then, we have that
\begin{align}
    |A| + \frac{\epsilon}{24} (n - |A|)\ge\sum_{l\in [n]}v_l\ge\epsilon n/12.
\end{align}

\noindent It implies that
\begin{align}
    |A|\ge\frac{\epsilon/24}{1-\epsilon/24}n\ge\epsilon n/24.
\end{align}
\end{proof}

\subsection{Proof of Theorem~\ref{lem_monotone_two}}\label{proof_sec_6_2}

\iffalse
\begin{theorem}
For any $\epsilon>0$, there exists an adaptive algorithm that, given query access to a function $f:[n]\rightarrow \mathbb R$ which is $\epsilon$-far from $(1,2)$-free, outputs a length-$2$ monotone subsequence of $f$ with probability at least $9/10$. The query complexity of this algorithm is $O(\epsilon^{-2}\log n)$.
\end{theorem}
\fi

% \begin{theorem}
% Suppose $l\in [n]$ is a an index with $\CumDensity$ at least $\epsilon/\blue{24}$. For any $\epsilon>0$, there exists an algorithm that, given $l$ and query access to a function $f:[n]\rightarrow \mathbb R$ which is $\epsilon$-far from $(1,2)$-free, outputs a length-$2$ monotone subsequence of $f$ with probability at least $9/10$. The query complexity of this algorithm is $O(\epsilon^{-1}\log n)$. 
% \end{theorem}

\begin{theorem}
Suppose $l\in [n]$ is a an index with $\CumDensity$ at least $\epsilon/24$. For any $\epsilon>0$, given $l$ and query access to a function $f:[n]\rightarrow \mathbb R$ which is $\epsilon$-far from $(1,2)$-free, Algorithm~\ref{alg: testing $(1,2)$-freeness each epoch} outputs a length-$2$ monotone subsequence of $f$ with probability at least $9/10$. The query complexity of this algorithm is $O(\epsilon^{-1}\log n)$. 
\end{theorem}

\begin{proof}
If $f$ is $\epsilon$-far from $(1,2)$-free, then from Lemma \ref{density_l_epsilon} we know that, the ratio of $l\in [n]$ satisfying the following equation is at least $\epsilon/24$,
\begin{align}\label{eqn: density_sum}
\delta_{l,1} + \delta_{l,2} + \cdots + \delta_{l,h}\ge\epsilon/24,
\end{align}
where $h = \lceil\log(n-l)\rceil$. Therefore, by repeating this procedure $O(1/\epsilon)$ number of times, the algorithm could find an index $l$ that satisfies Eq.~(\ref{eqn: density_sum}) with high constant probability.

Recall that $\delta_{l, t} = |A_{l, t}|/2^{t+1}$, $A_{l, t} = \{(i_1, i_2)\in U: \text{width}(i_1, i_{2}) = t, l\in [i_1-(i_{2}-i_1)/3, i_{2}+(i_{2}-i_{1})/3]\}$, and $U$ is the set of disjoint $(i_1, i_2, \cdots, i_k)$-sequences of $f$. Define $B_t = L_t(l)\cup R_t(l)$. We analyze according to the following two cases.

\noindent Case 1:
If there does not exist a block such that the density of $(1,2)$ in this interval is at least $\epsilon/50$.

\noindent If the density of $(1,2)$-tuple in all the blocks $B_1, B_2, \cdots, B_{h}$ is at most $\epsilon$, then the structure is dominated by growing suffix from the definition of $(\alpha, \beta)$-growing suffix. Let $D_t(l) = \{i_2:(i_1,i_2)\in A_{l,t}\}$ if $ t = 3, 6, 9, 12, \cdots$, and $D_t(l) =\emptyset$ otherwise. Then $D_t(l)\subset R_{t+1}(l)$, and satisfies the property that $f(b)<f(b')$ if $b\in D_t(l)$ and $b'\in D_{t'}(l)$, where $t, t'\in [h]$ and $t<t'$. It is satisfied that
\begin{align}
    |D_t(l)|/|R_{t+1}(l)|\le\alpha,
\end{align}
and that
\begin{align}
    \sum_{t=1}^{h} |D_t(l)|/|R_{t+1}(l)| = \sum_{t=1}^{h}\delta_t(l) \ge\beta,
\end{align}
\noindent where $\alpha = \epsilon/50$, and $\beta = \epsilon/24$.

\noindent Let $\mathcal{E}_t$ be the event that at least one index that belongs to $D_t(l)$ is obtained by sampling from $R_t(l)$. Therefore, we have that
\begin{align}
   \mathbb E[\sum_{t=1}^{h}\mathbbm{1}\{\mathcal E_{t}\}] \ge 
    \sum_{t=1}^{h}T\delta_t(l)\ge C/2,
\end{align}

\noindent where the first inequality is due to $T\delta_t(l)\le 1$ for all $t\in [h]$.

\noindent Therefore, $\sum_{t=1}^{h}\mathbbm{1}\{\mathcal E_{t}\}\ge 2$ with high probability. It implies that at least two indexes could be found from $D_{j_1}(l)$ and $D_{j_2}(l)$ separately. Therefore, $(1,2)$-pattern could be found from the set of blocks that lie at the right of $l$ with high probability.

\noindent Case 2: If there exists a block $B_t$ such that the density of $(1,2)$ in this block is at least $\Theta(\epsilon)$. From the definition, it implies that the density of $1$ in $L_t(l)$ is at least $\Theta(\epsilon)$, and the density of $2$ in $R_t(l)$ is at least $\Theta(\epsilon)$. Therefore, by sampling $O(1/\epsilon)$ times from each block, we could find a $(1,2)$-pattern by concatenating the left block and the right block, with high probability.

\noindent The query complexity of our algorithm is $O(\epsilon^{-1}\log n)$ from the design scheme.
\end{proof}

\section{Missing proofs in Section~\ref{sec: monotone_1_3_2}}

\subsection{Proof of Lemma~\ref{gap_lemma_1}}

%\begin{lemma}[\cite{newman2017testing}]
%If $f$ is $\epsilon$-far from $\pi$-free, where $\pi$ is the symmetric group of permutations of $[k]$. Then there exists at least $\epsilon n/k$ disjoint permutation in the form of $\pi$.
%\end{lemma}

% \begin{lemma}
% If $f$ is $\epsilon$-far from $\pi$-free, where $\pi$ is the symmetric group of permutations of $[K]$. Then there exists at least $\epsilon n/K$ disjoint permutation in the form of $\pi$.
% \end{lemma}

\begin{lemma}[Proposition 2.2 of \cite{newman2017testing}]
If $f$ is $\epsilon$-far from $\pi$-free, where $\pi$ is a permutation of $[m]$. Then there exists at least $\epsilon n/m$ disjoint permutation in the form of $\pi$.
\end{lemma}

\begin{proof}%Let $I = T(1) \cup T(2) \cdots \cup T(k)$. 
Let $T$ be a maximal disjoint set of $\pi$-tuples in $f$. Let $E(T) = \bigcup_{(i_1, i_2, \cdots, i_m)\in T}\{i_1, i_2, \cdots, i_m\}$. Then, $f$ restricted to the set $[n]\setminus E(T)$ is $\pi$-free. $f$ could be modified as a $\pi$-free function over $[n]$, if $f(i)$ for each $i\in I$ is replaced by the function value of the closest element that does not belong to $I$. That is, $f(i)$ for each $i\in E(T)$ is replaced by $f(j)$, where $j$ is the largest integer prior to $i$ satisfying that $j\notin E(T)$. If such a $j$ does not exist, then $f(i)$ is replaced by $\max_{a\in [n]} f(a)$.
Since $f$ is $\epsilon$-far from being $\pi$-free, we have $|E(T)|\ge\epsilon n$. Therefore, $|T|\ge\epsilon n/m$. 
\end{proof}

\subsection{Proof of Lemma~\ref{lem:ijk_monotone}}

%\begin{lemma}
%Let $T$ be the output of the subroutine GreedyDisjointTuples$+$.
%For a given $l$, consider the set of tuples $(i, j, k)$ satisfying that $(j, k)$ cross $l$. That is, consider the set $A = \{(i, j, k)\in T: j\le l \le k\}$. We have the following key observation: the elements $k\in A$ form a monotone sequence.
%\end{lemma}

\begin{lemma}
Let $T$ be the output of the subroutine GreedyDisjointTuples$+$.
For a given $l$, consider the set of tuples $(i, j, k)\in T$ satisfying that $j\le l \le k$. We denote $A = \{(i, j, k)\in T: j\le l \le k\}$, then the elements $k\in A$ form an increasing sequence.
\end{lemma}
\begin{proof}
We want to show the following claim. Consider any two tuples $(i_1,j_1,k_1)\in A$, and $(i_2,j_2,k_2)\in A$. If $k_1<k_2$, then we have that $f(k_1)\le f(k_2)$.

Since $(i_2,j_2,k_2)\in A$, we know that $(i_2,j_2,k_2)$ belongs to disjoint $(1,3,2)$-pattern, satisfying that $f(i_2)<f(k_2)<f(j_2)$, and that $i_2<j_2<l<k_2$. Suppose that $l<k_1<k_2$, if $f(k_1)>f(k_2)$, then according to the subroutine GreedyDisjointTuples$+$, the algorithm will chose $(i_2, k_1, k_2)$ to form a $(1,3,2)$-pattern. This leads to a contradiction since the algorithm indeed selects $(i_2,j_2,k_2)$ instead of $(i_2, k_1, k_2)$ to form a $(1,3,2)$-pattern.

This implies that for $k_1$ and $k_2$ satisfying that $l<k_1<k_2$, we have that $f(k_1)\le f(k_2)$. Therefore, the set of $k$ form an increasing sequence.

%Let $k_1$ be the maximum element in $E(T_0)$. According to the greedy subroutine, $j_1$ is the largest element in $E(T_0)\setminus E(T)$ for which there exist $i\in E(T_0)\setminus E(T)$ such that $(i,j_1,k_1)$ belongs to an $(1,3,2)$-pattern of $f$.

%Let us explain this from another point of view.
\end{proof}

\subsection{Proof of Lemma~\ref{gap_lemma_2}}

%\begin{lemma}
%Let $f: [n]\rightarrow \mathbb R$, and let $T_0\subset [n]^3$ be a set of disjoint $(1,3,2)$-tuple of $f$ with length $3$. Then there exists a set $T\subset [n]^3$ of disjoint $c$-gap $(1,3,2)$-tuples with $E(T)\subset E(T_0)$, satisfying that $|T|\ge|T_0|/9$, where $E(T) = \cup_{(i,j,k)\in T} \{i,j,k\}$ is the set of indices of subsequences in $T$, where $c\in [2]$.
%\end{lemma}

\begin{lemma}
Let $f: [n]\rightarrow \mathbb R$, and let $T_0\subset [n]^3$ be a set of disjoint $(1,3,2)$-tuples of $f$ with length $3$. Then there exist $c\in [2]$ and a set $T\subset [n]^3$ of disjoint $c$-gap $(1,3,2)$-tuples with $E(T)\subset E(T_0)$, satisfying that $|T|\ge|T_0|/6$, where $E(T) = \cup_{(i,j,k)\in T} \{i,j,k\}$ is the set of indices of subsequences in $T$.
\end{lemma}

% \begin{proof}
% Suppose $|T|<|T_0|/3$. Then, there exists a tuple $(i_1, i_2, i_3)\in T_0$ satisfying that $\{i_1, i_2, i_3\}\cap E(T) = \emptyset$. Since the algorithm increases the size of $T$ throughout the execution, $\{i_1, i_2, i_{3}\}\cap T = \emptyset$. This leads to a contradiction. Since when the algorithm finds $k$, a $(1,3,2)$-tuple disjoint from $T$ could be found, then $k$ is added to $T$. Therefore, $|T|\ge|T_0|/3$.
% \end{proof}

\begin{proof}
Let $T'$ reprsent the set of disjoint $(1,3,2)$-tuples. 
Suppose $|T'|<|T_0|/3$. Then, there exists a tuple $(i_1, i_2, i_3)\in T_0$ satisfying that $\{i_1, i_2, i_3\}\cap E(T') = \emptyset$. Recall that GreedyDisjointTuples$+$ increases the size of $T'$ throughout the execution, $\{i_1, i_2, i_{3}\}\cap T' = \emptyset$. When the algorithm finds $k$, a $(1,3,2)$-tuple disjoint from $T'$ could be found, then $k$ is added to $T'$. This leads to a contradiction. Therefore, $|T'|\ge|T_0|/3$, and $|T|\ge|T_0|/6$.
\end{proof}

\section{Missing proofs in Section~\ref{sec: analysis_tester_1_3_2}}

\subsection{Proof of Lemma~\ref{lem_gap_1_dominates}}

%\begin{lemma}
%For any $\epsilon>0$, given query access to a function $f:[n]\rightarrow \mathbb R$ which is $\epsilon$-far from $(1,3,2)$-free. For the case when $1$-gap tuples dominate, Algorithm~\ref{Framework: testing $(1,3,2)$-freeness when 1-gap terminates} outputs a $(1,3,2)$ subsequence of $f$ with probability at least $19/20$. The query complexity of this algorithm is $O(\epsilon^{-3}\log^3 n)$.
%\end{lemma}

\begin{lemma}
For any $\epsilon>0$, given query access to a function $f:[n]\rightarrow \mathbb R$ which is $\epsilon$-far from $(1,3,2)$-free. For the case when $1$-gap tuples dominate, algorithm~\ref{Framework: testing $(1,3,2)$-freeness when 1-gap terminates} outputs a $(1,3,2)$ subsequence of $f$ with probability at least $19/20$. The query complexity of this algorithm is $O(\epsilon^{-2}\log^3 n)$.
\end{lemma}

\begin{proof}
Suppose that the algorithm finds an integer $t$ satisfying that the density of $i$ in $L_t(l)$ is at least $\Theta(\epsilon/\log n)$, and the density of $(j,k)$ in $R_t(l)$ is at least $\Theta(\epsilon/\log n)$. We then divide the blocks into top, middle and bottom blocks based on the function value. The algorithm could then find $i$ such that $i$ lies in the middle block with probability at least $\epsilon/\log n$. Besides, the density of $(j,k)$ that could concatenate with the selected $i$ to form a $(1,3,2)$-tuple is lower bounded by $\Omega(\epsilon/\log n)$. The algorithm could find the required $(j,k)$ with high constant probability using $O(\log^2 n/\epsilon)$ queries. The query complexity of this algorithm is $O(\epsilon^{-2}\log^3 n)$.%\\
%$O(\epsilon^{-1}\log n)$.
\end{proof}

\subsection{Proof of Lemma~\ref{lem:rtl_and_gamma_deserted}}

%\begin{lemma}
%Suppose the density of monotone sequence in $L_{t^*}(l)\cup R_{t^*}(l)$ is at least $\frac{\epsilon}{6\log n}$, for a fixed index $l$ and an integer ${t^*}$. Let $S$ be the monotone sequence in $R_{t^*}(l)$. Then the density of non-$\gamma$-deserted elements in $S$ is at least $1-1/(\zeta\log n)$, where $\gamma = \epsilon/(\eta\log^2 n)$, where $\eta = 936$ and $\zeta = 52$ are constants.
%\end{lemma}
\begin{lemma}
Suppose the density of monotone sequence in $L_{t^*}(l)\cup R_{t^*}(l)$ is at least $\frac{\epsilon}{36\log n}$, for a fixed index $l$ and an integer ${t^*}$. Let $S$ be the monotone sequence in $R_{t^*}(l)$. Then the density of non-$\gamma$-deserted elements in $S$ is at least $1-1/(\zeta\log n)$, where $\gamma = \epsilon/(\eta\log n)$, where $\eta = 936$ and $\zeta = 52$ are constants.
\end{lemma}
\begin{proof}
%We know that $S\subset L_{t^*}(l)\cup R_{t^*}(l)$, and there exists an integer $t$ such that $|S|\ge \frac{\epsilon}{6\log n} |L_{t^*}(l)\cup R_{t^*}(l)|$. Let $\gamma = \epsilon/(\eta\log^2 n)$. 
Define $S' = \{i\in S: \text{i is $\gamma$-deserted in S}\}$. For a fixed index $l$, consider $S = \{k: k\in A\}$, where $A = \{(i, j, k)\in T: j\le l \le k\}$. That is, $S$ is the monotone sequence in $R_{t^*}(l)$. From Lemma~\ref{lem: gamma_deserted_lemma}, we have that $|S'|\le 3\gamma(1-\frac{\epsilon}{36\log n})|L_{t^*}(l)\cup R_{t^*}(l)|/(1-\gamma)\le |S|/(52\log n)$. We thus have that $|S\setminus S'| = |S| - |S'| \ge (1-1/(52\log n))|S|$. Then the density of non-$\gamma$-deserted elements in the monotone sequence is at least $(1-1/(52\log n))$.
\end{proof}

\subsection{Proof of Lemma~\ref{binary search lemma}}

%\begin{lemma}
%Suppose the density of monotone sequence in $L_{t^*}(l)\cup R_{t^*}(l)$ is at least $\frac{\epsilon}{6\log n}$, for a fixed index $l$ and an integer ${t^*}$. Given $(i,j)$, $i<j$ and $f(i)<f(j)$, $i, j\in L_{t^*}(l)$. With probability at least $25/26$, Algorithm \ref{alg: Random Binary Search} outputs an index $k$ satisfying that $(i,j,k)$ forms a $(1,3,2)$-pattern. The query complexity of this algorithm is $\tilde O(\epsilon^{-1}\log^4 n)$.
%\end{lemma}

\begin{lemma}
Suppose the density of monotone sequence in $L_{t^*}(l)\cup R_{t^*}(l)$ is at least $\frac{\epsilon}{36\log n}$, for a fixed index $l$ and an integer ${t^*}$. Given $(i,j)$, $i<j$ and $f(i)<f(j)$, $i, j\in L_{t^*}(l)$. With probability at least $25/26$, Algorithm \ref{alg: Random Binary Search} outputs an index $k$ satisfying that $(i,j,k)$ forms a $(1,3,2)$-pattern. The query complexity of this algorithm is $O(\epsilon^{-1}\log^3 n)$.%$\tilde O(\epsilon^{-1}\log^3 n)$.
\end{lemma}

\begin{proof}
The key point is the assumption that 
the element $k$ that we are searching for is not a $\gamma$- deserted element in $R_{t^*}(l)$, where $\gamma = \frac{\epsilon}{936 \log n}$. Let $I_h$ be the search range at iteration $h$, and let $A$ be the set of monotone sequence in $R_{t^*}(l)$. Initially, $I_1 = R_{t^*}(l)$. 
%From Lemma~\ref{lem:rtl_and_gamma_deserted} we know that, the density of non-$\gamma$-deserted elements in the monotone sequence $A$ is at least $(1-1/(52\log n))$.

The density of $k$ that cross $l$ in $L_{t^*}(l)\cup R_{t^*}(l)$ is at least $\Theta(\epsilon/\log n)$. Lemma~\ref{lem:ijk_monotone} implies that the density of monotone sequence in $L_{t^*}(l)\cup R_{t^*}(l)$  is at least $\Theta(\epsilon/\log n)$. That is, $|A|\ge \Theta(\epsilon/\log n) |L_{t^*}(l)\cup R_{t^*}(l)|$. Since $k$ is not a $\gamma$-deserted element in $L_{t^*}(l)\cup R_{t^*}(l)$. From the definition of deserted element, we know that the density of monotone sequence in any interval that contains $k$ is at least $\gamma$. That is, $|A\cap I|\ge\gamma |I|$ holds for any $I\subset L_{t^*}(l)\cup R_{t^*}(l)$. 

Let $\mathcal E_h$ be the event that an element that belongs to $A$ is found at iteration $h$. Lemma~\ref{lem:rtl_and_gamma_deserted} implies that the density of non-$\gamma$-deserted elements in the monotone sequence $A$ is at least $(1-1/(52\log n))$.

The approach is as follows: call the subroutine FindCoordinate$(f, L_{t^*}(l)\cup I_h, \gamma, l)$. The non-$\gamma$-deserted element $k\in L_{t^*}(l)\cup I_h\subset L_{t^*}(l)\cup R_{t^*}(l)$, it implies that $f|_{(L_{t^*}(l)\cup I_h)}$ is $(\gamma/5)$-far from $(2,1)$-free. The algorithm FindCoordinate could find an element that belongs to $A$ from $L_{t^*}(l)\cup I_h$ with probability at least $1 - 1/(52\log n)$. Then,
\begin{align}
    \Pr(\bar {\mathcal E}_h) \le \Pr(\bar {\mathcal E}_{h-1}) + \Pr({\mathcal E}_{h-1})\Pr(\bar {\mathcal E}_{h}|{\mathcal E}_{h-1})
    \le 1/(26\log n).
\end{align}

Therefore, the probability that $\mathcal E_h$ occurs is at least $1-1/(26\log n)$.
\noindent By a union bound over all iterations, we have that
\begin{align}
    \Pr(\text{binary search succeeds}) &= 
    \Pr(\cap_h {\mathcal E}_h) \nonumber\\
    &= 1 - \Pr(\cup_h \bar {\mathcal E}_h)
    \ge 1 - \log(n) \cdot 1/(26\log n)\ge 25/26.
\end{align}

\noindent The query complexity of FindCoordinate is $O(\gamma^{-1}\log n) = O(\log^2 n/\epsilon)$. Therefore, the total query complexity of this procedure is upper bounded by $O(\epsilon^{-1}\log^3 n)$.%\\
%$\tilde O(\epsilon^{-1}\log n)$.

\end{proof}

\subsection{Proof of Lemma~\ref{lem_gap_2_dominate}}

%\begin{lemma}
%For any $\epsilon>0$, given query access to a function $f:[n]\rightarrow \mathbb R$ which is $\epsilon$-far from $(1,3,2)$-free. For the case when $2$-gap tuples dominate, Algorithm~\ref{Framework: testing $(1,3,2)$-freeness} outputs a $(1,3,2)$ subsequence of $f$ with probability at least $19/20$. The query complexity of this algorithm is $O(\epsilon^{-2}\log^5 n)$.
%\end{lemma}

\begin{lemma}
For any $\epsilon>0$, given query access to a function $f:[n]\rightarrow \mathbb R$ which is $\epsilon$-far from $(1,3,2)$-free. For the case when $2$-gap tuples dominate, Algorithm~\ref{Framework: testing $(1,3,2)$-freeness when 1-gap terminates} outputs a $(1,3,2)$ subsequence of $f$ with probability at least $19/20$. The query complexity of this algorithm is $O(\epsilon^{-2}\log^4 n)$.
\end{lemma}

\begin{proof}
With Lemma \ref{density_l_epsilon} and Lemma \ref{binary search lemma}, we could analyze the query complexity of the tester. From Lemma \ref{density_l_epsilon}, the ratio of $l\in [n]$ satisfying that $v_l\ge\epsilon/24$ is at least $\epsilon/24$. Suppose $l^*$ is the index with $v_{l^*}\ge\epsilon/24$. Then, there exists an integer $t^*$, satisfying that the density of tuples crossing ${l^*}$ in interval $[{l^*}-2t^*, {l^*}+t^*]$ is at least $\epsilon_1 = \epsilon/(36\log n)$. The query complexity of finding $(i,j)$ at the left of $l^*$ using the TestMonotoneEpoch algorithm is $O(\epsilon_1^{-1}\log n)$. From Lemma \ref{binary search lemma}, the query complexity of binary search for finding $k$ is $O(\epsilon^{-1}\log^3 n)$. Therefore, the total query complexity is
$\epsilon^{-1}\log n\left(O(\epsilon_1^{-1}\log n) + O(\epsilon^{-1}\log^3 n)\right) =  O(\epsilon^{-2}\log^{4}n)$.%\\

%$\epsilon^{-1}\left(O(\log n) + \tilde O(\epsilon^{-1}\log n)\right) = \tilde O(\epsilon^{-2}\log n)$.
\end{proof}

\subsection{Proof of Theorem~\ref{thm: test_132}}

%\begin{theorem}
%For any $\epsilon>0$, there exists an adaptive algorithm that, given query access to a function $f:[n]\rightarrow \mathbb R$ which is $\epsilon$-far from $(1,3,2)$-free, outputs a $(1,3,2)$ subsequence of $f$ with probability at least $9/10$. The query complexity of this algorithm is $\tilde O(\epsilon^{-3}\log^{5}n)$.
%\end{theorem}

\begin{theorem}[Restatement of Theorem~\ref{thm: test_132_main_result}]
For any $\epsilon>0$, there exists an adaptive algorithm that, given query access to a function $f:[n]\rightarrow \mathbb R$ which is $\epsilon$-far from $(1,3,2)$-free, outputs a $(1,3,2)$ subsequence of $f$ with probability at least $9/10$. The query complexity of this algorithm is $O(\epsilon^{-2}\log^4 n)$.
\end{theorem}

\begin{proof}
Combining  Lemma~\ref{lem_gap_1_dominates} and Lemma~\ref{lem_gap_2_dominate}, with probability at least $1-1/20-1/20=9/10$, our algorithm outputs a $(1,3,2)$-tuple. The query complexity of our algorithm is $O(\epsilon^{-2}\log^3 n) + O(\epsilon^{-2}\log^{4}n) = O(\epsilon^{-2}\log^{4}n)$.%\\

%$O(\epsilon^{-1}\log^2 n)$.
\end{proof}

%% Authors are advised to use a BibTeX database file for their reference list.
%% The provided style file elsarticle-num.bst formats references in the required Procedia style

\end{document}